\newtheorem{lemma}{Lemma}
\newcommand{\Tr}[1]{\mathrm{Tr}}
\newcommand{\sech}[1]{\mathrm{sech}}
\newcommand{\blk}{\color{black}}
\begin{document}
\setcounter{page}{1}

%\title{Finite-size secret--key rates of discrete modulation CV QKD with Phase-Shift Keying under passive attacks}

\title{Finite-size secret-key rates of discrete modulation continuous-variable quantum key distribution under  Gaussian attacks}

\author{Gabriele Staffieri}
\affiliation{Dipartimento Interateneo di Fisica, Università di Bari, 70126 Bari, Italy}
%\affiliation{Dipartimento Interateneo di Fisica, Politecnico di Bari, 70126 Bari, Italy}
\affiliation{INFN, Sezione di Bari, 70126 Bari, Italy}

\author{Giovanni Scala}
\affiliation{Dipartimento Interateneo di Fisica, Politecnico di Bari, 70126 Bari, Italy}
\affiliation{INFN, Sezione di Bari, 70126 Bari, Italy}
%\email{giovanni.scala@poliba.it}

\author{Cosmo Lupo}
\affiliation{Dipartimento Interateneo di Fisica, Politecnico di Bari, 70126 Bari, Italy}
\affiliation{Dipartimento Interateneo di Fisica, Università di Bari, 70126 Bari, Italy}
\affiliation{INFN, Sezione di Bari, 70126 Bari, Italy}

\begin{abstract}
\noindent
Quantum conditional entropies play a fundamental role in quantum information theory.
In quantum key distribution, they are exploited to obtain reliable lower bounds on the secret-key rates in the finite-size regime, against collective attacks and coherent attacks under suitable assumptions. 
Here we consider continuous-variable communication protocols, where the sender Alice encodes information using a discrete modulation of phase-shifted coherent states, and the receiver Bob decodes by homodyne or heterodyne detection.
We compute the Petz-Rényi and sandwiched Rényi conditional entropies associated with these setups, 
assuming either a passive eavesdropper or one that injects thermal photons into the channel,
who gathers the quantum information leaked through a lossy communication line of known or bounded transmittance.
Whereas our results do not directly provide reliable key-rate estimates, they do represent useful ball-park figures. 
We obtain analytical or semi-analytical expressions that do not require intensive numerical calculations.
These expressions serve as bounds on the key rates that may be tight in certain scenarios.
We compare different estimates, including known bounds that have already appeared in the literature
and
new bounds.
The latter are found to be tighter for very short block sizes.
\end{abstract}

 \maketitle

\section{Introduction}

\noindent
Quantum key distribution (QKD) provides a solution to a fundamental problem in cryptography: the distribution of secret keys among users connected through an insecure communication channel~\cite{QCintro}. Quantum mechanics allows us to achieve this goal by encoding information into quantum states, and with additional resources such as an authenticated classical communication channel. The latter can be authenticated by a pre-shared secret key, making QKD a scheme of secret-key expansion.

Continuous-variable (CV) QKD protocols implement QKD by encoding information in the phase and/or quadrature of the quantum electromagnetic field~\cite{CVrev}. Here, we focus on protocols where such information is encoded by displacing the amplitude of coherent states in phase space.
CV QKD is of particular interest due to its high level of compatibility with existing communication infrastructure, for the scalability and low cost of the required hardware~\cite{Entropyrev}. Proving its security is a challenging task, though, also because the relevant Hilbert spaces have infinite dimensions. In contrast to discrete-variable QKD, proof techniques based on entropic uncertainty relations are either not applicable or yield sub-optimal estimates of the key rates~\cite{Furrer2012,Leverrier2015}.

In recent years, reliable numerical methods to estimate the secret-key rates have been developed and successfully applied to a variety of QKD protocols~\cite{reliableColes,reliableWinick}. They are reliable in the sense that they do not over-estimate the key rates. This approach is particularly suitable to assess the security of CV QKD with discrete modulation (DM), where information is encoded into a finite constellation of coherent states.
Initial works on DM CV QKD focused on the asymptotic limit of infinite channel uses~\cite{Lutk2019,kanitschar2022optimizing} (see also Refs.~\cite{Ghorai2019,Denys2021}).
More recent works have encompassed statistical fluctuations for a finite number of channel uses (finite-size effects), addressing the challenges of obtaining secret keys even when processing blocks of quantum signals of relatively small size~\cite{Lupo2022, FlorianPRXQ2023, AcinQuantum2024, AcinPRA2025}, and provided experimental demonstrations~\cite{TobiasDM2024}.
Indeed, recent mathematical and theoretical developments have significantly advanced the analysis of finite-size effects, especially in the context of coherent attacks within the scenarios encompassed by the entropic accumulation theorems~\cite{EAT2020, improved2ndorder, LiuNP2021, MetgerGEAT, Dupui2023_no_smoothing, George2022,REVdiqkd, MEAT}.

Inspired by these works, here we focus on two distinct models for the channel connecting the sender (Alice) and the receiver (Bob). The first is a pure loss channel, modelling e.g.~attenuation in a characterized optical fiber. This scenario constrains the eavesdropper (Eve) to a particular form of passive attacks: she collects the field that leaks into the environment en route from Alice to Bob. The second model is a thermal noisy channel, which extends the previous description by including the effects of gaussian noise in addition to the losses.
There Gaussian attacks are not sufficient to assess the general security of the QKD protocols, for which one would make minimal or no assumptions on the attack performed by Eve. However, the passive attack model allows us to obtain analytical or semi-analytical expressions for the entropic quantities that characterize the communication scheme, such as the Petz-Rényi and sandwiched Rényi divergences~\cite{Petz1986,Wilde2014,Tomamichel2014,BertaIEEE,BertaJMP}. 
These results give us insights into the relation between different proof techniques and allows us to compare them. Furthermore, our entropic estimates represent useful benchmarks towards the analysis of key rates for unknown channels, providing upper bounds on the key rates given a known, minimal amount of loss and thermal noise expected in the communication channel.
Specifically, we focus on collective attacks in the finite-size regime. We consider two examples of communication protocols: the first is based on encoding by binary phase-shift keying (BPSK) and decoding by homodyne detection, the second is based on encoding by quadrature phase-shift keying (QPSK) and decoding by heterodyne detection. 
We compare different quantum conditional entropies associated with different proof techniques used to bound the finite-size secret-key rate. 
Our study sheds some light on which proof techniques are more efficient across different regimes of loss and block size.
For very small block sizes, we discuss a new approach that has not been explicitly considered by other authors, obtaining new a more favorable bounds on the secret-key rates.
Our work paves the way to a more complete and reliable security analysis in the finite-size regime, which may be developed for CV QKD along the paths indicated by Refs.~\cite{Cai2025, Tan_arXiv2025, Tan_PRXQ} in the context of discrete-variable QKD.
\begin{figure}
    \centering
    \subfloat[BPSK protocol.]{
        \includegraphics[width=0.47\linewidth]{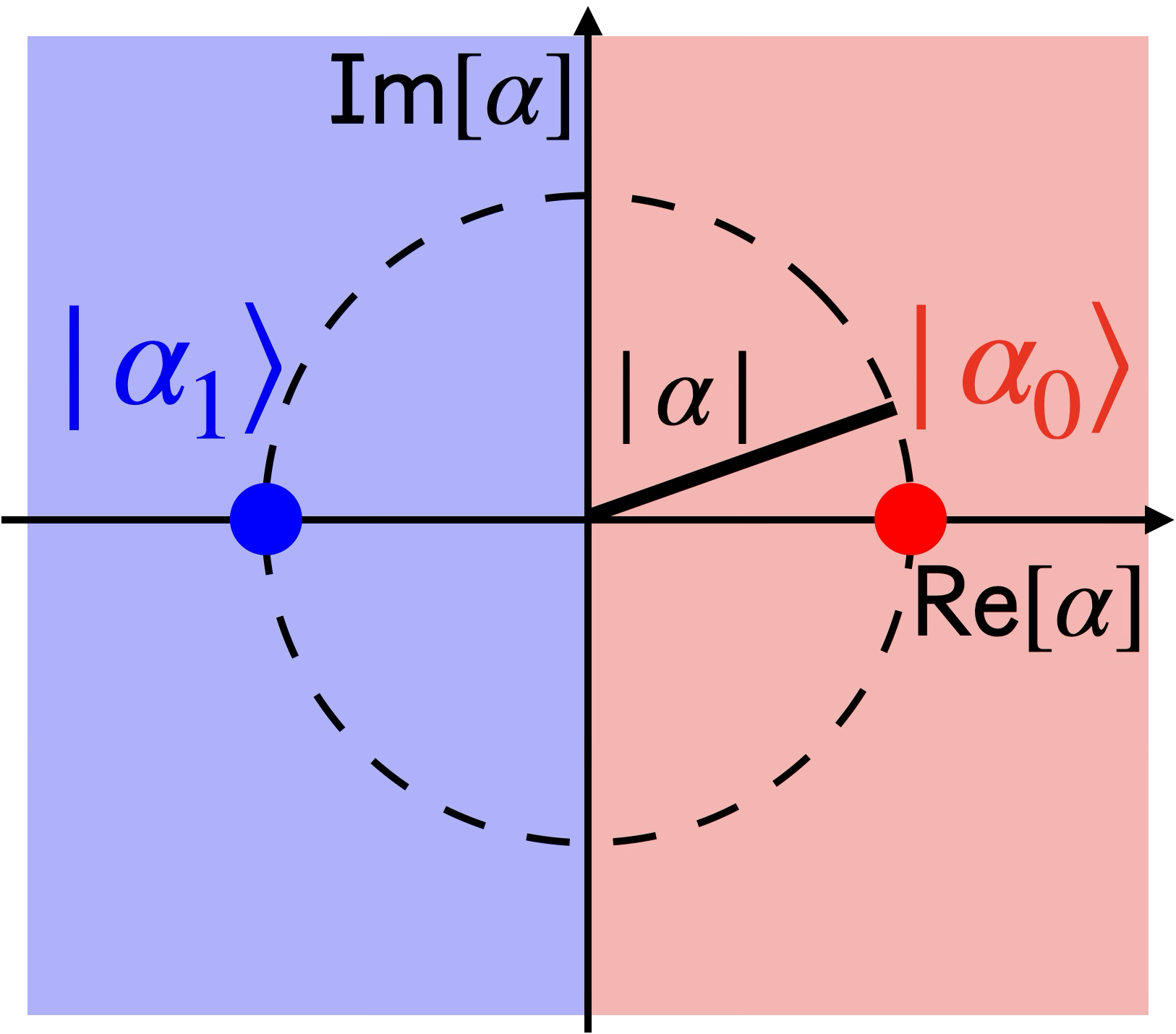}
        \label{bpskspace}
    }
    \subfloat[QPSK protocol.]{
        \includegraphics[width=.47\linewidth]{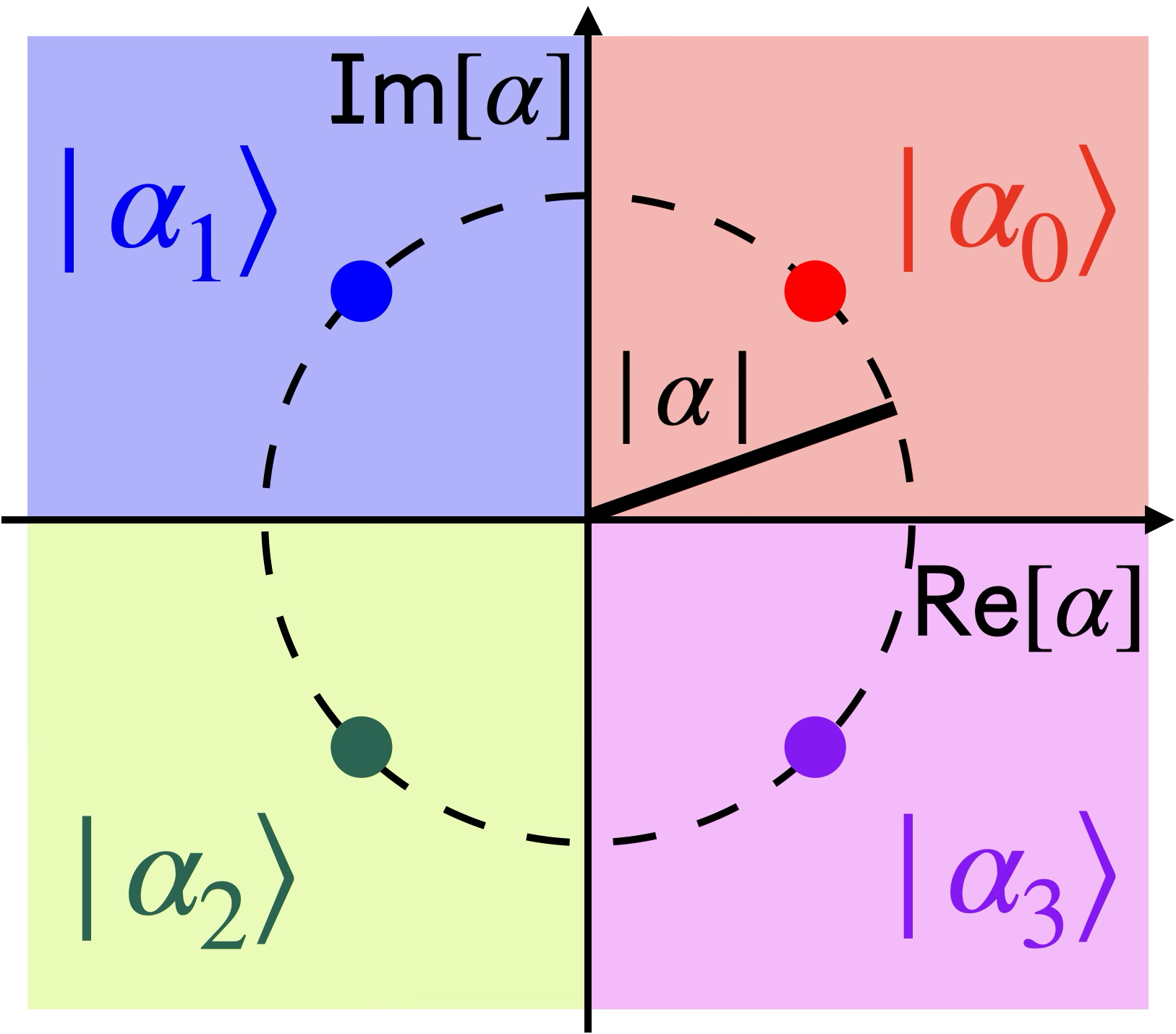}
        \label{QpskScheme}
    }
    \caption{\justifying 
    Phase-space representation of the encoding and decoding routines in BPSK (Fig.~\ref{bpskspace}) and QPSK (Fig.~\ref{QpskScheme}) protocols.}
    \label{fig:overall_PS}
\end{figure}

\section{Min-entropy lower bounds}
\label{sec1}
We assess the secrecy of QKD protocols by exploiting well-known entropic functionals, such as the 
smooth min-entropy (denoted as ${\tilde H_\text{min}^{\uparrow \epsilon}}$),
the Petz-Rényi entropies (denoted as $H_a^\downarrow$ and $H_a^\uparrow$), 
and the sandwiched Rényi conditional entropies (denoted as $\tilde H_a^\downarrow$ and $\tilde H_a^\uparrow$), whose definitions and main properties we briefly review in Appendix~\ref{sec2}. In our analysis we deal with classical-quantum (CQ) states of the form
\begin{align}\label{CQdef}
    \rho_{YE}
    =
    \sum_{y=0}^{N-1} p_y
    \ket{y}_Y \bra{y}
    \otimes 
    \rho_{E|y} \, ,
\end{align}
where $\{\ket{y}_Y\}$, 
with $y=0, \dots , N-1$, 
is an orthonormal basis encoding a classical register $Y$ (Bob's raw key), with associated probability $p_y$,
and $\rho_{E|y}$ is the density matrix representing Eve's state given $y$.
Here we focus on the model of collective attacks, where $n$ copies of the state $\rho_{YE}$ are given.
By applying a suitable randomness extractor on $n$ i.i.d.~instances of $Y$, one can obtain up to $\ell_{\epsilon''}$ nearly-uniformly random bits~\cite{tomamichel2012framework}, with
\begin{align}\label{hash}
    \ell_{\epsilon''} \geq {\tilde H_\text{min}^{\uparrow \epsilon} (Y^n | E^n)} \color{black} - 2 \log{(1/\epsilon')} + 1 \, ,
\end{align} 
where the obtained random bits are $\epsilon''$-close to be uniformly random and uncorrelated with the Eve system $E^n$, and
$\epsilon''=\epsilon+\epsilon'$. 
The conditional min-entropy is bounded by the sandwiched Rényi conditional entropy~\cite{EAT2020}, for any $a>1$~\footnote{As noted by Zhang, Fu, and Knill~\cite{ZhangFuKnill2020}, the restriction \(\alpha\in(1,2]\) used in the proofs of Lemmas~B.4 and~B.10 of Ref.~\cite{EAT2020} is unnecessary: invoking Proposition~6.22 of Tomamichel~\cite{TomamichelBook2016} yields the same type of bound for all \(\alpha>1\). See also Proposition~2.2 in Ref.~\cite{Buscemi2019}. See Fig.~\ref{fig:trans}.}
\begin{align}
    \tilde H_\text{min}^{\uparrow\epsilon}(Y^n|E^n)
     \geq \tilde H^\uparrow_a(Y^n|E^n) - \frac{g(\epsilon)}{a -1}\nonumber\\
     = n \tilde H^\uparrow_a(Y|E) - \frac{g(\epsilon)}{a -1}
    \, ,
    \label{RateSand}
\end{align}
with $g(\epsilon) = - \log{ \left( 1 - \sqrt{1-\epsilon^2 }\right) } \leq \log(2/\epsilon^2)$. 
Our goal is to estimate this lower bound for the BPSK and QPSK protocols.
Similar bounds also hold, where
the other conditional Rényi entropies appear instead of $\tilde H^\uparrow_a$.

On the other hand, the asymptotic equipartition property (AEP) links the conditional smooth min-entropy with the conditional von Neumann entropy \cite{tomamichel2012framework}:
\begin{align}
            \frac{1}{n} \,  \tilde H_\text{min}^{\uparrow\epsilon}(Y^n|E^n)
            \geq
        H(Y|E) - \frac{\delta(\epsilon)}{\sqrt{n}}
         \label{RateAEP}
\end{align}
with $\delta(\epsilon) = 4 \log{ \left( 2 + 
\sqrt{ N } \right) }\,  
\sqrt{
\log{\left( 2/\epsilon^2 \right)}
}$ 
and $N$ is the cardinality of $Y$. 
The bound in Eq.~\eqref{RateAEP} is tight for large $n$ but it becomes loose for relatively small block sizes. {In the finite-size regime, other authors~\cite{AcinQuantum2024,AcinPRA2025}} have exploited a bound of the form 
\begin{equation}\label{boundref18}
    \tilde H_a^\uparrow(Y|E)
    %\ge\tilde H_a^\downarrow(Y|E)
    \ge B_a(Y|E) \, ,
\end{equation} 
which holds for $a\in(1,2]$ and bounds the sandwiched Rényi entropy in terms of the von Neumann entropy \cite{improved2ndorder}, where
\begin{align}
    B_a(Y|E) & =  H(Y|E) - \frac{(a-1) \ln{2}}{2}  V(Y|E)\nonumber\\
    & \hspace{2cm} - (a-1)^2 K(a) \, . 
\label{VKBound}
\end{align}
Here $V(A|B):= V(\rho_{AB}||\mathbb I_A\otimes \rho_B)$ is the \textit{conditional entropy variance} defined as
\begin{align}
    V(\rho||\sigma) 
    = \frac{1}{\mathrm{tr}(\rho)}
    \left\{
    \mathrm{tr}
    \left[
    \rho(\log \rho-\log\sigma)^2
    \right]
    \right\}
    -D(\rho||\sigma)^2,
\end{align}
and
\begin{align}
    K(a)
& = \frac{1}{ 6 (2-a)^3 \ln{2}} \,
2^{ (a-1) (- H_a^\downarrow(Y|E) + H(Y|E))}\nonumber\\
&\times
\ln^3
{
\left(
2^{
- H_2^\downarrow (Y|E) + H(Y|E)
}
+e^2
\right)}
\, .
\end{align} 
Using Eq.~\eqref{boundref18} into Eq.~\eqref{RateSand} yields
\begin{align}
            \frac{1}{n} \, \tilde H_\text{min}^{\uparrow\epsilon} (Y^n | E^n)  
            \ge B_a(Y|E)- \frac{g(\epsilon)}{n(a -1)}.
    \label{RateBound}
\end{align}
In the rest of the paper we compare the 
bounds on the smooth min-entropy $\tilde H_{\mathrm{min}}^{\uparrow \epsilon}$ in 
\textit{(i)} Eq.~\eqref{RateSand} (for any $a > 1$), 
\textit{(ii)} Eq.~\eqref{RateAEP}, 
and \textit{(iii)} Eqs.~\eqref{RateBound} (for $a\in(1,2]$), and then use these quantities to estimate the secret-key rates for BPSK and QPSK protocols under the assumption of passive attacks. 
In Sec.~\ref{relatedW} we extend this analysis to include more general Gaussian attacks where 
%also study the excess noise, where 
the eavesdropper injects thermal photons into the channel.

\section{Phase Shift Keying Protocols}
\label{sec:model}
We consider two examples of DM communication protocols where the sender Alice prepares, with uniform probability, quantum states selected from a constellation of $N$ single-mode coherent states, denoted as $|\alpha_x \rangle$ for $x=0,
\dots,N-1$ ~\cite{2statep}. 
In particular, we focus on phase-shift keying (PSK) 
$\{\ket{\alpha_x}\}_{x=0}^{N-1}$,
where the coherent states only differ by their relative phase, $\alpha_x = e^{i 2 x \pi/N } \alpha$ (we assume $\alpha = |\alpha|$): binary PSK (BPSK) for $N=2$, 
and quadrature PSK (QPSK) for $N=4$.
For decoding, the receiver Bob applies either homodyne (for BPSK) or heterodyne (for QPSK) detection (a.k.a.~double homodyne detection) to infer the value of $x$ encoded by Alice.
Consider the quadrature and phase operators, 
$q = (a + a^\dag)/\sqrt{2}$, 
$p = -i (a - a^\dag)/\sqrt{2}$, 
satisfying the canonical commutation relations $[q,p]=i$ (in natural units),
homodyne detection implements (within some approximation) a measurement of either the $q$ or $p$ operator (or a linear combination of them).
If Bob receives the coherent state $|\alpha_x\rangle$, 
by homodyne detection of the quadrature operator $q$ he can learn the real part of the amplitude, since
$\langle q \rangle = \sqrt{2} \, \mathsf{Re} \{ \alpha_x \}$.
For $N=2$, Bob decoding scheme is described by the POVM elements $Q_y$, with $y=0,1$, with $Q_0 = \int_{q \leq 0} |q\rangle \langle q| dq$, and 
$Q_1 = \int_{q > 0} |q\rangle \langle q| dq$, where $|q\rangle$ is the generalized eigenstate of the quadrature operator $q$.
BPSK encoding and decoding are schematically shown in Fig.~\ref{bpskspace}.
By contrast, heterodyne detection implements a joint measurement of both quadrature and phase operators. This is achieved by first splitting the signal with a $50$--$50$ beam splitter. 
Then the output of one branch of the beam splitter is measured with homodyne detection along the $q$ quadrature, and the other branch output is measured along $p$.
This measurement is formally described by a continuous family of POVM (positive-operator-valued measurement) elements, $\{ \Lambda_\gamma \}$, where
$\Lambda_\gamma = \frac{1}{\pi} \, |\gamma\rangle \langle \gamma|$ is proportional to the one-dimensional projector on the coherent state $|\gamma\rangle$ of complex amplitude $\gamma$. For $N=4$ the decoding POVM formally reads as
\begin{align}\label{eq:Lambda}
    \Lambda_y &= 
    \frac{1}{\pi}
    \int_{R_y}\
    \ket{\gamma}_B\bra{\gamma}\,\mathrm{d}^2\gamma\; ,
    \qquad 
    \qquad y=0,\dots,3,\nonumber\\
    R_y &= 
    \left\{
    \gamma \in \mathbb C 
    \bigg| 
    \arg(\gamma) \in 
    \left[
    y\frac{\pi}{2}, (y+1)\frac{\pi}{2}
    \right]
    , |\gamma|>0
    \right\}.    
\end{align}
The encoding and decoding of QPSK is schematically shown in Fig.~\ref{QpskScheme}.

The protocols are parametrized by the amplitude parameter $|\alpha|$ that appears in the state preparation, and therefore in $\rho_{YE}$. 
The optimal bound on the secret-key rate can be obtained following an optimization over both $|\alpha|$ and $a$. For passive attacks, the quantum channel from Alice to Bob is described as pure-loss channel with known transmittance factor $\eta \in [0,1]$.

\paragraph{Pure loss channel---}
The channel models linear loss in well-characterized optical fibers. We may formally represent it as a beam splitter with transmittance $\eta$. 
Linear loss maps coherent states into coherent states, i.e.~given Alice input $|\alpha_x\rangle$, Bob obtains the attenuated coherent state
$|\sqrt{\eta} \, \alpha_x\rangle$.
In this passive-attack scenario, a potential eavesdropper has access to the environment of the channel, which in our case is the output of the reflective arm of the beam splitter. %, \textit{i.e.}~the output of the complementary channel of the channel from Alice to Bob. 
Therefore, given Alice input $|\alpha_x\rangle$, the eavesdropper Eve obtains the attenuated coherent state $|\sqrt{1-\eta} \, \alpha_x\rangle$. 
According to our model, this entails manipulations of coherent states, which are a particular family of pure, Gaussian states. 
Note that we cannot directly apply the results of Ref.~\cite{WildeGaussian} concerning Gaussian states, because, as we will see below, in communication protocols with PSK encoding one deals with non-Gaussian states arising from the sum of Gaussian states.
\paragraph{Thermal noisy channel---}
This model generalizes the previous description to a scenario where the communication is also affected by excess noise. Similar to the pure-loss case, the channel model includes a beam splitter with transmittance $\eta$. However, while in the pure-loss setting Eve injects the vacuum state into the free input branch of the beam splitter, here she owns a two-mode squeezed vacuum (TMSV) state, characterized by a mean photon number $\mu$. Eve injects one mode of the TMSV, which is in a thermal state, into the beam splitter to interact with Alice's signal, while the other mode is kept by her and evolves trivially. This setup allows Eve to purify the state she uses to introduce noise in the channel. In particular we characterize this channel through the \textit{excess noise} parameter $\xi$; once the average photon number $\mu$ and the transmittance $\eta$ are fixed, the \textit{excess noise} is defined as~\cite{Lutk2019}
\begin{align}
    \xi = \frac{\eta\mu}{2(1-\eta)} \, . 
\end{align}
Further mathematical details and explicit derivations of the states are found in Appendix \ref{sec:thermal}.
We have fully defined our optimization problem: for any given $\eta\in[0,1]$, since the attack model $\rho_{YE}$ is known, our task reduces to maximize the lower bounds on the secret-key rate in Eqs.~\eqref{RateSand}, \eqref{RateAEP} and \eqref{RateBound}, for BPSK and QPSK protocols, over the entropic parameter $a$ and the coherent state amplitude $|\alpha|$. This full optimization is performed specifically for the pure-loss case; for the thermal noisy channel, due to the high computational demand required to evaluate the entropic quantities numerically, we do not perform an optimization over the protocol parameters.

\section{Pure loss scenario: computation of conditional entropies for $N$--PSK protocols}
\label{sec:CondEnt}

Before proceeding with the optimization of the lower bounds on the secret-key rate, we present explicit expression for the conditional quantum entropies of a CQ state of the form of Eq.~(\ref{CQdef}).
The details of the derivations are in the Appendix~\ref{renyiproof}. 
First, the Petz-Rényi entropy reads
\begin{align}
    H_a^\downarrow (Y|E) 
 =
    \frac{1}{1-a}
    \log \left(
    \sum_{y=0}^{N-1}  \frac{1}{N^a} \,
    \mathrm{tr}
    \left( 
     \rho_{E|y}^a \;\rho_E^{1-a} 
    \right )
    \right) \, ,
    \label{petzl4}
\end{align}
where for BPSK and QPSK, and for the channel models considered here, the random variable $Y$ is uniform with $p_y = 1/N$. 
Furthermore, for our channel models the following relation holds
\begin{align}
    \rho_{E|y \oplus t} = U_t \, \rho_{E|y} \, U_t^\dag \, ,
    \label{symmetry00}
\end{align}
where $\{ U_t \}_{t=0, \dots, N-1}$ is a finite symmetry group of unitary matrices, for $t=0, \dots, N-1$, and $\oplus$ denotes summation modulo $N$.
This allows a further simplification of Eq.~\eqref{petzl4}:
\begin{equation}\label{HagiuMAIN}
    H_a^\downarrow (Y|E) = \log N + \frac{1}{1-a}
    \log \left\{
    \mathrm{tr}
    \left( 
     \rho_{E|0}^a \;\rho_E^{1-a} 
    \right )
    \right\} \,.
\end{equation}

For the optimized Petz-Rényi entropy we obtain
\begin{align}
       H_a^\uparrow(Y|E) 
     = -\frac{a}{1-a}\left(\log N-
    \log
    \left[
    \mathrm{tr}
    \left(
    \sum_{y=0}^{N-1}   \,
    \rho_{E|y}^a
    \right)^\frac{1}{a}
    \right] 
    \right)\, .
    \label{HasuMAIN}
\end{align}

Applying the symmetry argument as above to the sandwiched Rényi entropy we obtain
\begin{align}\label{SandgiuMAIN}
    \tilde{H}_a^\downarrow (Y|E) 
    = \log N+\frac{1}{1-a} \,
    \log{
    \mathrm{tr} \left[
\left(\rho_{E}^{\frac{1-a}{2 a}}
\rho_{E|0}\;
\rho_{E}^{\frac{1-a}{2a}}
\right)^a
    \right]
    } \, .
\end{align}
%\textcolor{blue}{[non è il caso di spiegare i passaggi? "la prima equazione deriva da.. la seconda equazione deriva da..."]}

Finally we consider the optimized sandwiched Rényi entropy. {As detailed in the Appendix~\ref{renyiproof}, we obtain}
\begin{align}
\tilde{H}_a^\uparrow (Y|E)
& \geq \log N  \,\nonumber\\
+ \frac{1}{1-a} 
   & \log
    \left(
     \inf_{\sigma \in \mathfrak{I}} 
    \mathrm{tr} \left[
\left(\sigma_{E}^{\frac{1-a}{2 a}}
\rho_{E|0}\;
\sigma_{E}^{\frac{1-a}{2a}}
\right)^a
    \right]
    \right) 
    \label{HsupMAIN} \, ,
\end{align}
where $\mathfrak{I}$ denotes the set of invariant states under the action of a symmetry group $\{ U_t \}$, \textit{i.e.}
$\sigma_E = U_t \, \sigma_E \, U_t^\dag \, $.

\subsection{Conditional entropies for BPSK}\label{subsec:BPSK}
We are now ready to specialize the different computations of the entropic functionals that we showed at the beginning of this section to $N=2$ and $N=4$, \textit{i.e.}~for BPSK and QPSK protocols. For now we assume a pure-loss channel.

In BPSK, Alice prepares, with equal probability, one of the two coherent states, $| \alpha_x \rangle =
| (-1)^x \alpha \rangle$, 
for $x=0,1$, where $\alpha = |\alpha|$.
In turn, Bob measures the received states by homodyne detection in one of the two quadrature operators, either $q$ or $p$.
Measurements of the $p$ quadrature are used for channel estimation.

We consider QKD in \textit{reverse reconciliation}, \textit{i.e.}~the key is determined by the discretized output $y$ of Bob's measurement. In detail, for key extraction, the output of the measurements of the $q$ quadrature are used; a phase-space representation of this scheme is shown in Fig.~\ref{bpskspace}.
{Assuming a pure-loss channel from Alice to Bob, with transmittance $\eta$ (see Section~\ref{sec:model}), then} the probability density on a point of the phase space, for Bob measuring $q$ upon receiving  $|(-1)^x\sqrt{\eta} \, \alpha\rangle$ is
\begin{align}
    p\left( q|\, (-1)^x \sqrt{\eta} \,\alpha \,\right) = \frac{1}{\sqrt{\pi}} \, e^{- \left(
    q- (-1)^x\sqrt{2\eta} \, \alpha
    \right)^2} \, .
    \label{pyxBPSK}
\end{align}
Bob's outcomes are discretized into a binary value $y=0,1$, where $y=0$ if the measured value of the quadratures is $q>0$, and $y=1$ otherwise.  Therefore the conditional probabilities  $p\left( y |x \right)$ are  
\begin{align}
    p(y=x|x) &= 
    \frac{1}{\pi}
    \int_0^{+\infty}dq \;
    e^{- \left(
    q- \sqrt{2\eta} \, \alpha
    \right)^2} \nonumber\\
    &= 
    \frac{1}{2} \left( 
    1 + \text{erf}[\sqrt{2\eta} \, \alpha]
    \right)\,, 
    \label{py=x}\\
     p(y\neq x|x) &= 
    \frac{1}{\pi}
    \int_{-\infty}^{0}dq \;
    e^{- \left(
    q- \sqrt{2\eta} \, \alpha
    \right)^2}\nonumber\\
    &= 
    \frac{1}{2} \left( 
    1 - \text{erf}[\sqrt{2\eta} \, \alpha]
    \right) 
    \, .
    \label{pydifx}
\end{align}
Now that we computed~\eqref{py=x} and~\eqref{pydifx}, the object of interest for the reverse reconciliation scheme is exactly Eq.~\eqref{CQdef} which is specialized for BPSK as follows
\begin{align}
    \rho_{YE}
    & = \frac{1}{2} 
\sum_{y=0,1} |y\rangle_Y \langle y| \otimes
\rho_{E|y} \label{aaa}
\\
& = \frac{1}{2} 
\sum_{y=0,1} |y\rangle_Y \langle y| \otimes
\sum_{x=0,1} p(x|y) \, \rho_{E|x}
\nonumber\\
& = \frac{1}{2} 
\sum_{y=0,1} |y\rangle_Y \langle y| \otimes
\sum_{x=0,1} p(x|y)  \,
    | (-1)^x \gamma\rangle_E 
    \langle (-1)^x \gamma|
    \, ,\nonumber
\end{align}
where $\gamma = \sqrt{1-\eta}\;\alpha$, and we observe that $p(y|x) = p(x|y)$.
In particular:
\begin{align}
\rho_{E|y=0}
 = 
    \frac{1}{2} \sum_{x=0,1}&\left( 
    1 + (-1)^x\text{erf}[\sqrt{2\eta} \, \alpha ]
    \right)
    %\nonumber\\&
    |(-1)^x \gamma\rangle 
    \langle (-1)^x \gamma|
    \, , \\
%%%%%%%%%%%%%
\rho_{E|y=1}
 = 
    \frac{1}{2} \sum_{x=0,1}&\left( 
    1 + (-1)^{x+1}\text{erf}[\sqrt{2\eta} \, \alpha ]
    \right)
    %\nonumber\\&|(-1)^x \gamma\rangle 
    \langle (-1)^x \gamma|
    \, .
    %%%
\end{align}
Finally, the average state is
\begin{align}
    \rho_E = 
    \frac{1}{2}
    \left(
    |\gamma \rangle \langle \gamma| +
|-\gamma \rangle \langle -\gamma| \right) \, .  
\end{align}

It is convenient to express the coherent states in the orthogonal basis $\{ \ket{\psi_+} , \ket{\psi_-} \}$
\begin{align}
    \ket{\psi_\pm}  = \frac{|\gamma\rangle \pm |-\gamma\rangle}{\sqrt{2 c_\pm}} \, , 
\end{align}
such that
\begin{align}
    |\pm\gamma\rangle  = 
    \frac{ \sqrt{
    c_+} |\psi_+\rangle 
    \pm \sqrt{c_-} |\psi_-\rangle }{\sqrt{2}} \, .
\end{align}
The coefficients $c_\pm$ are for normalization. Using the formula for the scalar product of two coherent states, $\langle \beta_1 | \beta_2 \rangle = e^{-\frac{1}{2}\left( 
|\beta_1|^2 + |\beta_2|^2 - 2 \beta_1^* \beta_2
\right)}$, we obtain
\begin{align}
c_\pm = 1 \pm e^{-2 |\gamma|^2} \, .
\end{align}
From this we get
\begin{align}
    |\pm\gamma\rangle \langle \pm \gamma| 
    = \frac{1}{2}
    \big(&
    c_+ |\psi_+\rangle\langle\psi_+|
    + c_- |\psi_-\rangle\langle\psi_-|\nonumber\\
    &
    \pm \sqrt{c_+ c_-} \left( |\psi_+\rangle\langle\psi_-| + \mathrm{hc}
    \right)\,
    \big).
\end{align}
Therefore,
\begin{align}
    \rho_E 
    & = 
    \frac{1}{2}
    \left(
    c_+ |\psi_+\rangle\langle\psi_+|
    + c_- |\psi_-\rangle\langle\psi_-|
    \right) \, , \\
\rho_{E|y=0}
=&     \frac{1}{2}
    \biggl(
    c_+ |\psi_+\rangle\langle\psi_+|
    + c_- |\psi_-\rangle\langle\psi_-|\nonumber\\
    &+ \text{erf}[\sqrt{2\eta}|\alpha|]
    \sqrt{c_+ c_-}
     \left( |\psi_+\rangle\langle\psi_-| + \mathrm{hc}
    \right)    
    \biggr) \, , \\
    %%%%
\rho_{E|y=1}
& =     \frac{1}{2}
    \biggl(
    c_+ |\psi_+\rangle\langle\psi_+|
    + c_- |\psi_-\rangle\langle\psi_-|\nonumber\\
    &- \text{erf}[\sqrt{2\eta}|\alpha|]
    \sqrt{c_+ c_-}
     \left( |\psi_+\rangle\langle\psi_-| + \mathrm{hc}
    \right)    
    \biggr) \, , 
    \label{rhoey1}
\end{align}
where in the last two lines we used the relation
\begin{align}
     \frac{
    |\gamma\rangle \langle \gamma| 
        -     |-\gamma\rangle \langle -\gamma|
        }{2}
    & = \frac{\sqrt{c_+ c_-}}{2}
     \bigg( |\psi_+\rangle\langle\psi_-| + \mathrm{hc}
    \bigg)
    \, .
\end{align}

For these particular states we can find an analytical expression for some of the quantum conditional entropies. 

The first result that we obtain in this paper is that specializing Eq.~\eqref{CQdef} with Eqs.~\eqref{aaa}-\eqref{rhoey1} for BPSK the following conditional entropies admit an analytical expression. Let us define
\begin{gather}
\kappa = e^{2\alpha^2(\eta-1)}\, ,\,
r = \operatorname{erf}\!\big(\sqrt{2\eta}\,\alpha\big),\,
g = \sqrt{\,1 + (\kappa^{-2}-1)\,r^2\,} \, , \\
\theta = \operatorname{artanh}(\kappa g) \, ,\quad
\phi = \operatorname{artanh}(\kappa) \, ,
\quad \sech{}\, x = 1/\cosh x \, .
\end{gather}
We obtain:
\begin{widetext}
\begin{itemize}
    \item Petz-Rényi entropy:
\begin{align}
\label{eq:Ha-elegant}
H_a^\downarrow(Y|E)
=& 1 + \frac{1}{1-a}\,
\log\!\Big\{
\sech{}^a\;\!\theta\;
\sech{}^{\,1-a}\;\!\phi\;
\big[
\cosh(a\theta)\cosh((1-a)\phi)
+g^{-1}\sinh(a\theta)\sinh((1-a)\phi)
\big]
\Big\} \, ;
\end{align}
    \item Optimized Petz Rényi entropy:
\begin{equation}
H_a^\uparrow(Y|E)
= \frac{a}{1-a}\,
\log\!\Bigg[
2^{\,\frac{1}{a}-2}\,\sech\theta\,
\Big(
\big(\cosh(a\theta)+g^{-1}\sinh(a\theta)\big)^{\!1/a}
+
\big(\cosh(a\theta)-g^{-1}\sinh(a\theta)\big)^{\!1/a}
\Big)
\Bigg]\, ;
\label{PetzUpAn}
\end{equation}

\item  Sandiwiched Rényi entropy:
\begin{equation}
\label{eq:Htilde-down-elegant}
\tilde{H}_a^\downarrow(Y|E)
= -\frac{2a}{1-a}
\;+\;
\frac{1}{1-a}\,
\log\!\Bigg[
2^{\,a}\,\sech{}\,\phi\;
\Big(
\big(\cosh(\tfrac{\phi}{a})-\Delta\big)^{a}
+
\big(\cosh(\tfrac{\phi}{a})+\Delta\big)^{a}
\Big)
\Bigg],
\qquad \Delta := \sqrt{\,r^{2} + \sinh^{2}\!\big(\tfrac{\phi}{a}\big)\,}.
\end{equation}
\end{itemize}
\end{widetext}
The derivation of the analytical expression \eqref{PetzUpAn} for the optimized Petz-Rényi entropy $ H_a^\uparrow(Y|E)$ was possible thanks to Lemma 1 of Ref.~\cite{BertaJMP}, which provides an explicit expression of the optimal form for this quantity. 
\begin{figure}
    \centering
    \subfloat[BPSK protocol.]{
        \includegraphics[width=\linewidth]{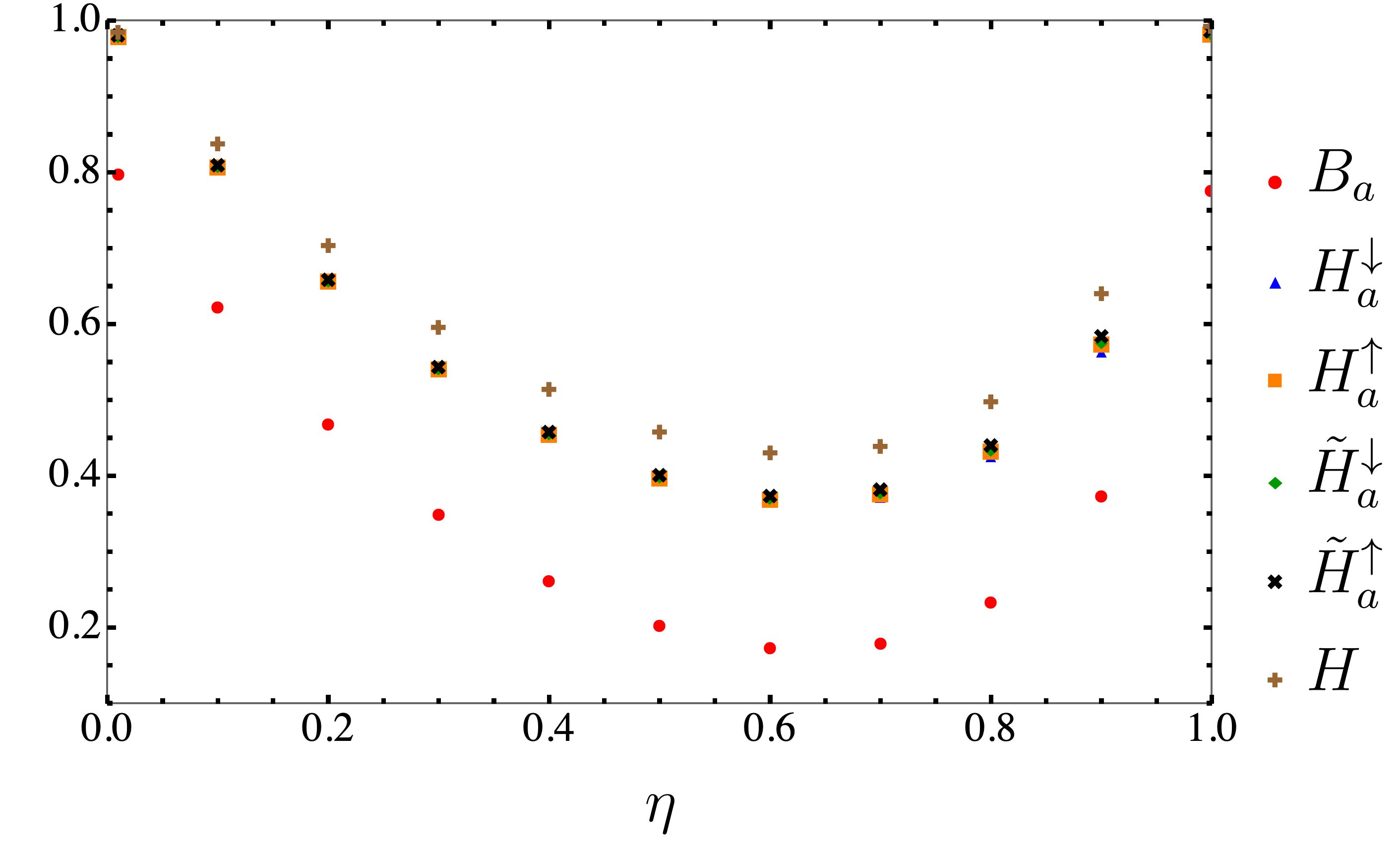}
        \label{BPSKcomp}
    }\\
    \subfloat[QPSK protocol.]{
        \includegraphics[width=\linewidth]{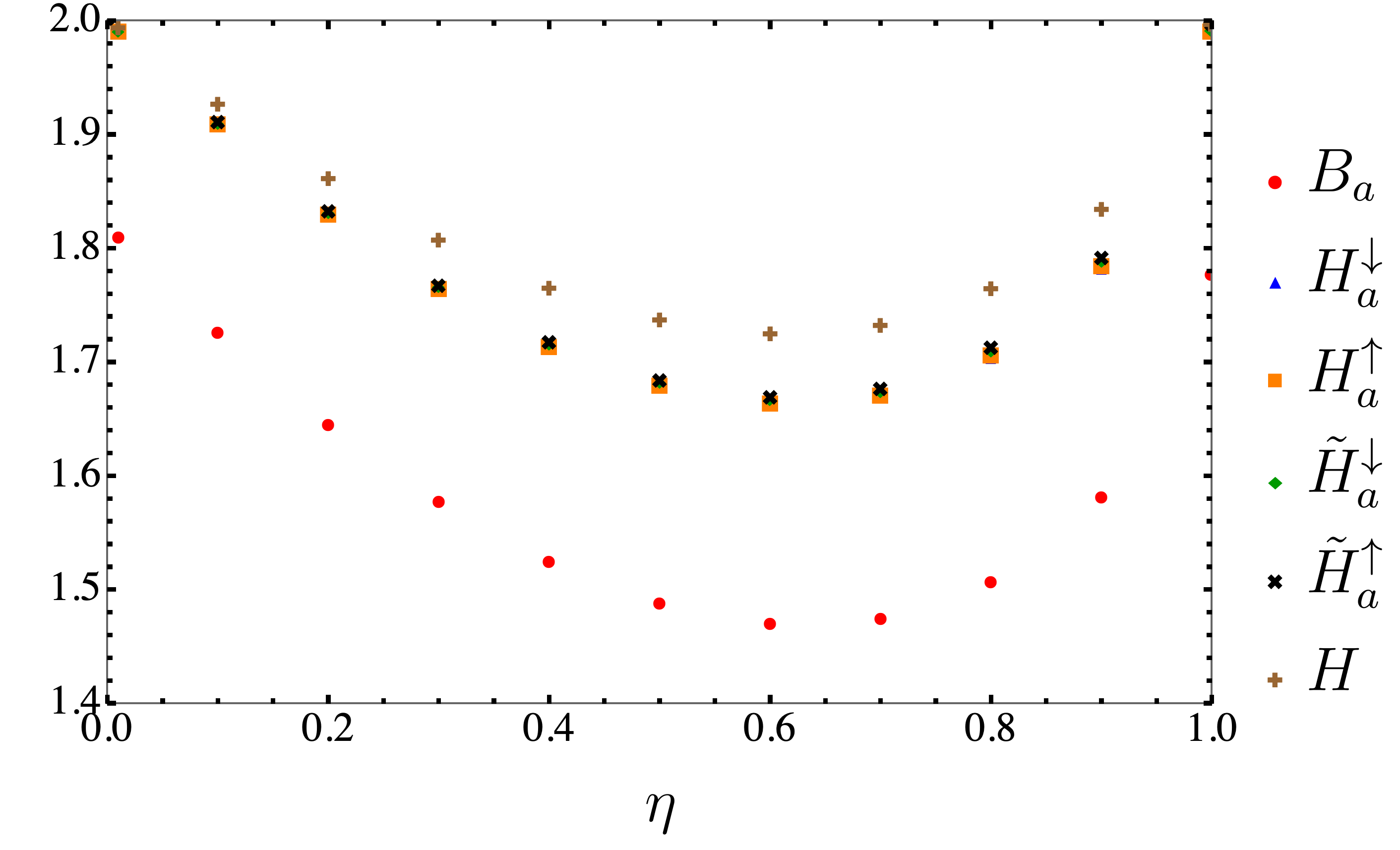}
        \label{QPSKcomp}
    }
    \caption{\justifying  Comparison between different kinds of entropies, computed on the CQ states at $\alpha=1$ and $a=1.2$ on $\rho_{YE}$ in Eqs.~(\ref{aaa}) and (\ref{CQ_qpsk}) respectively for BPSK in \ref{BPSKcomp} and in \ref{QPSKcomp} for QPSK protocol.
    In particular: Petz-Rényi entropy $H^\downarrow_a(Y|E)$ (\ref{HagiuMAIN}); optimized Petz-Rényi entropy $H^\uparrow_a(Y|E)$ (\ref{HasuMAIN}); sandwiched Rényi entropy $\tilde H^\downarrow_a(Y|E)$ (\ref{SandgiuMAIN}); optimized sandwiched Rényi entropy $\tilde H^\uparrow_a(Y|E)$ (\ref{HsupMAIN}), von Neumann entropy $H(Y|E)$ (\ref{VonNeumann}); continuity bound $B_a(Y|E)$ (\ref{VKBound}). All the quantities converge to $1$ as $\eta\to0,1$ for BPSK and to $2$ for QPSK, while the minima are reached around $\eta_{\text{min}}\sim0.6$ in both cases.
    %the curves representing the different entropies are in agreement with the expected monotonicity with the prameter $a$.
    }
    \label{fig:overall}
\end{figure}

In Fig.~\ref{BPSKcomp} we show the different entropic functionals for fixed values of the parameters $|\alpha|=1$ and $a=1.2$, against the transmittance $\eta$.
First of all, we observe that all the different quantities converge to the maximum value $1$ as $\eta\to0,1$. Let us analyze these two particular cases. If $\eta\to1$, the quantum channel that transmits signals from Alice to Bob can be considered ideal, as no losses are ever experienced and Bob receives exactly what Alice sends. Then Eve's final state will not contain any information about Bob's measurements outcomes and her uncertainty about it will be maximal. On the contrary when $\eta\to0$ Eve will own Alice's signal unaltered. This implies that the eavesdropper can easily infer the bit of information encoded in the light signal after performing homodyne detection. This might seem the worst case-scenario, but actually Eve is interested in what Bob's measurement outcome will be, since this is what will end up in the final key string in reverse reconciliation. 
If $\eta\to0$ Bob receives a state which very close to the vacuum, then practically his homodyne detection will return a random variable of zero mean uncorrelated with Eve. 
Hence, again, Eve's uncertainty will reach the maximum value. 
The second thing that we can notice is that the different curves related to the different Rényi entropies follow the monotonicity relations discussed in Appendix~\ref{sec2}, as we could expect, with the von Neumann entropy $H(Y|E)$ as the upper bound for each value of $\eta$. We see that the difference between the various entropies is actually very small and becomes negligible when $a\to1$ since all these quantities converge to the von Neumann entropy. Finally, if we focus on $B_a(Y|E)$ in Eq.~(\ref{VKBound}), we see that it represents a lower bound not only for the sandwiched Rényi entropies, but for the Petz-Rényi entropies as well; the fact that this is the only quantity that does not tend to $1$ at the boundary of the transmittance range depends on the fact that the subtractive terms rigidly shift the von Neumann entropy down.

\subsection{Conditional entropies for QPSK}\label{subsec:QPSK}

We extend from two to four-state, QPSK protocol. 
Alice prepares, with equal probability, one of the four coherent states $| \alpha_k \rangle = | 
i^k e^{i \pi/4} \alpha \rangle$, for $k=0,1,2,3$ centered in a quadrant of the phase space as shown in Fig.~\ref{QpskScheme}.
Bob decoding is the heterodyne detection, yielding a continuous, complex-valued amplitude $\beta$. 
To this continuous output, Bob associates a discrete value $y=0,1,2,3$ that encodes two bits ($00$, $01$, $10$, $11$), corresponding to the four colored regions associated to $\ket{\alpha_y}$ in Fig.~\ref{QpskScheme}. For example, if $\beta$ belongs to the first quadrant in phase space, \textit{i.e.}~if $\mathsf{Re}( \beta), \mathsf{Im}( \beta) >0$, then $y=0$, and so on and so forth.
Bob measurement, combined with this discretization map, defines a random variable $Y$ with alphabet $\{0,1,2,3 \}$. 

For QPSK, we compute explicitly the probability for Bob to measure the output $y$ in correspondence with a preparation $k$
\begin{align}\label{pyk}
   p(y|k) = 
   \mathrm{tr} 
   \bigg[
   \left(
   \ket{\phi_k}_A\bra{\phi_k}\otimes \Lambda_y
   \right)
   \rho_{AB}
   \bigg
   ].
\end{align}
This case differs from BPSK, because we need to evaluate explicitly $\rho_{AB}$, Alice and Bob's state after the transmission and before Bob's measurement, which according to \eqref{rhoABstate} is given by
\begin{align}\label{rhoAB}
    \rho_{AB}= 
    \frac{1}{4}\sum_{x,x'=0}^3
    \ket{\phi_x}_A\bra{\phi_{x'}} \otimes 
\ket{\sqrt{\eta}~\alpha_x}_B\bra{\sqrt{\eta}~\alpha_{x'}} \nonumber\\
e^{-\alpha^2(1-\eta)
\left( 
1 - e^{-i(x'-x)\frac{\pi}{2}} 
\right)} \, .
\end{align}

The analogue of Eqs.~\eqref{py=x}-\eqref{pydifx} of BPSK, for QPSK is obtained using Eq.~\eqref{pyk} and Eq.~\eqref{eq:Lambda}:
\begin{align}
    p(y|k) = \begin{cases}
        \frac14
        \left[
        1+\text{erf}\left(
        \sqrt{\frac{\eta}{2}} \, \alpha 
        \right)
        \right]^2, \qquad y=k\\
         \frac14
         \left[
         1-\text{erf}
         \left( 
         \sqrt{\frac{\eta}{2}} \, \alpha
         \right)
         \right]^2, \qquad  |y-k| =2\\
         \frac14
         \left[
         1-\text{erf}^2
         \left( \sqrt{\frac{\eta}{2}} \, \alpha
         \right)
         \right], \qquad  |y-k|=1,3
    \end{cases}
    \label{pkj}
\end{align}
We can introduce the short-hand notations 
\begin{align}
    P_\pm=
    \frac12
    \left[
    1\pm\text{erf}
    \left(
    \sqrt{\frac{\eta}{2} \, }\alpha
    \right)
    \right]
    \label{eq:p_pm}
\end{align}
and gather them as
\begin{align}
     p(y|k)=p_{y,k}=
\left(\begin{array}{cccc}
P_{+}^{2} & P_{+}P_{-} & P_{-}^{2} & P_{+}P_{-}\\
P_{+}P_{-} & P_{+}^{2} & P_{+}P_{-} & P_{-}^{2}\\
P_{-}^{2} & P_{+}P_{-} & P_{+}^{2} & P_{+}P_{-}\\
P_{+}P_{-} & P_{-}^{2} & P_{+}P_{-} & P_{+}^{2}
\end{array}\right) \, .
\label{matrix}
\end{align}
and Eq.~\eqref{CQdef} for QPSK specializes as
\begin{align}
    \rho_{YE}
    & = \frac{1}{4} 
\sum_{y=0}^3 |y\rangle_Y \langle y| \otimes
\rho_{E|y}
\label{CQ_qpsk}\nonumber
\\
& = \frac{1}{4} 
\sum_{y=0}^3 |y\rangle_Y \langle y| \otimes
\sum_{k=0}^3 p(k|y) \, \rho_{E|k}\nonumber
\\
& = \frac{1}{4} 
\sum_{y=0}^3 |y\rangle_Y \langle y| \otimes
\sum_{k=0}^3 p(k|y)  \,
    |\gamma_k \rangle_E 
    \langle  \gamma_k|
    \, ,
\end{align}
where $
    \ket{\gamma_k}_E = \ket{
    i^k
    e^{i\frac{\pi}{4}}
    \sqrt{1-\eta}\; \alpha
    }_E
    $ and $p(y|k)=p(k|y)$.
In particular
\begin{align}
\rho_{E|y}
 = \sum_{k=0}^3 p(k|y)  \, 
    | \gamma_k\rangle _E
    \langle  \gamma_k|
    \, ,
\end{align}
is computed using $p(y|k)$ from the matrix \eqref{matrix}. 
Finally, the average state is
\begin{align}
    \rho_E = 
    \frac14
    \sum_{y=0}^3
    \rho_{E|y}=
    \frac14
    \sum_{y=0}^3
    \ket{\gamma_y}_E\bra{\gamma_y}
    \, .
\end{align}
As in the case of BPSK, we can express the coherent states in an orthonormal basis, 
\begin{align}
    \ket{\psi_s}_E = 
    \frac{N_s}{2}
    \sum_{k=0}^3
    e^{-i\frac{\pi}{2}sk}\ket{\gamma_k}_E, \qquad s=0,1,2,3 \, .
\end{align}
From this we obtain the inverse formulas
\begin{align}
    \ket{\gamma_k}_E = \sum_{s=0}^3 \frac{1}{2N_s} e^{i\frac{\pi}{2}sk} \ket{\psi_s} _E,\qquad k=0,1,2,3 \, .
\end{align}
Imposing the normalization condition $\braket{\psi_s|\psi_s}=1$ and recalling that $\gamma = \sqrt{1-\eta} \; \alpha $, we can find the expression of the normalization constants 
\begin{align}
    N_s^{-2}
     = 1
       +e^{-2\gamma^2}\cos(\pi s)
       +2e^{-\gamma^2}\cos\Bigl(\gamma^2+\tfrac{\pi}{2}s\Bigr) \, .
\end{align}
From this, we obtain
\begin{align}
   \rho_{E|k}= 
   \ket{\gamma_k}_E\bra{\gamma_k} 
   &=\sum_{s,s'=0}^3 
   \frac{1}{4N_s N_{s'}}
   e^{i\frac{\pi}{2}(s - s')k}
   \ket{\psi_s}_E\bra{\psi_{s'}}
   \, ,
\end{align}
and the conditional states are
\begin{align}
    \rho_{E|y} = 
    \sum_{s=0}^{3} \sum_{s'=0}^{3} 
    \left(
    \frac{1}{4 N_s N_{s'}} 
    \sum_{k=0}^{3} 
    p(y|k) \, e^{i \frac{\pi}{2}(s - s')k} 
    \right)
    |\psi_s\rangle \langle \psi_{s'}| \, .
\end{align}
We note that Eve's average states is diagonal in the orthogonal basis, \textit{i.e.}
\begin{align}\label{rhoEQPSK}
    \rho_E = \frac14
    \sum_{s=0}^3 
    \frac{1}{N_s^2} 
    \ket{\psi_s}\bra{\psi_s} \, .
\end{align}
The analytical expressions from Eq.~\eqref{rhoAB} to Eq.~\eqref{rhoEQPSK} simplify the final numerical evaluation of $H_a^\downarrow$, $H_a^\uparrow$, $\tilde H_a^\downarrow$, and others shown in Fig.~\ref{QPSKcomp}. However, obtaining a fully analytical derivation analogous to the BPSK case in Eqs.~\eqref{eq:Ha-elegant}, \eqref{PetzUpAn}, and \eqref{eq:Htilde-down-elegant} remains an open challenge.

In Fig.~\ref{QPSKcomp} there is a comparison between different entropic functionals, for fixed values of $\alpha$ and $a$, against the transmittance $\eta$. The trends of these quantities, as well as the motivation behind them, are identical to those of the BPSK protocol (the only significant difference is the maximum value that these entropies assume as $\eta\to0,1$, which is $2$). Also in this case we found that the minimum value for all the quantities is reached around 
$\eta_{\text{min}}\sim0.6$. 

Now that we verified the monotonicity relations \eqref{monotone} with the conditional entropies at fixed $a$ evaluated on the states at coherent amplitude depending on $|\alpha|$ and the transmittance $\eta$, the final goal is to estimate the secret-key rate for BPSK and QPSK optimizing over $a$ and $\alpha$ at a given value of trasmittance $\eta$ and \textit{finite} rounds $n$.
\begin{figure}
    \centering
    \subfloat[BPSK protocol.]{
        \includegraphics[width=\linewidth]{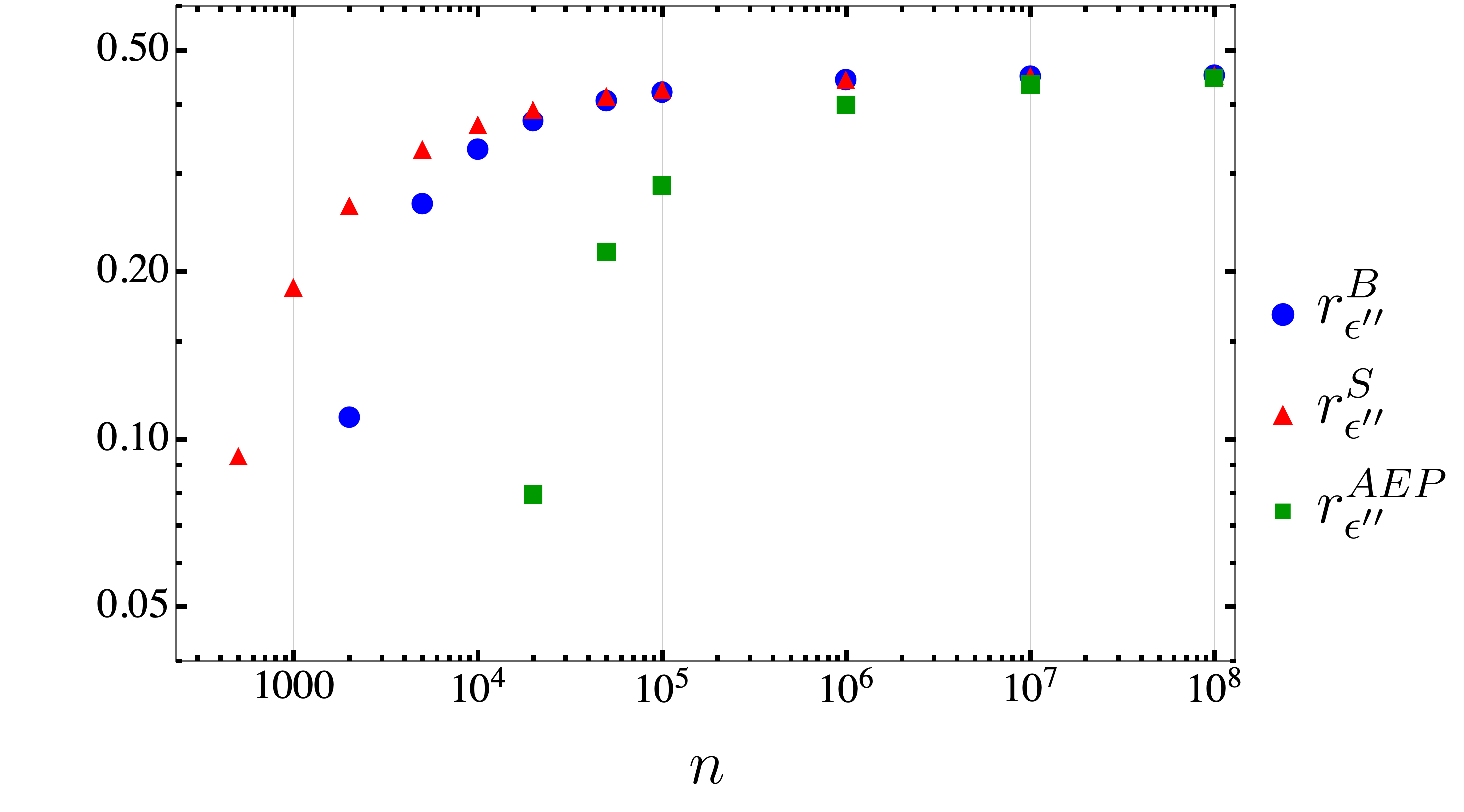}
        \label{rateBPSK}
    }\\
    \subfloat[QPSK protocol.]{
        \includegraphics[width=\linewidth]{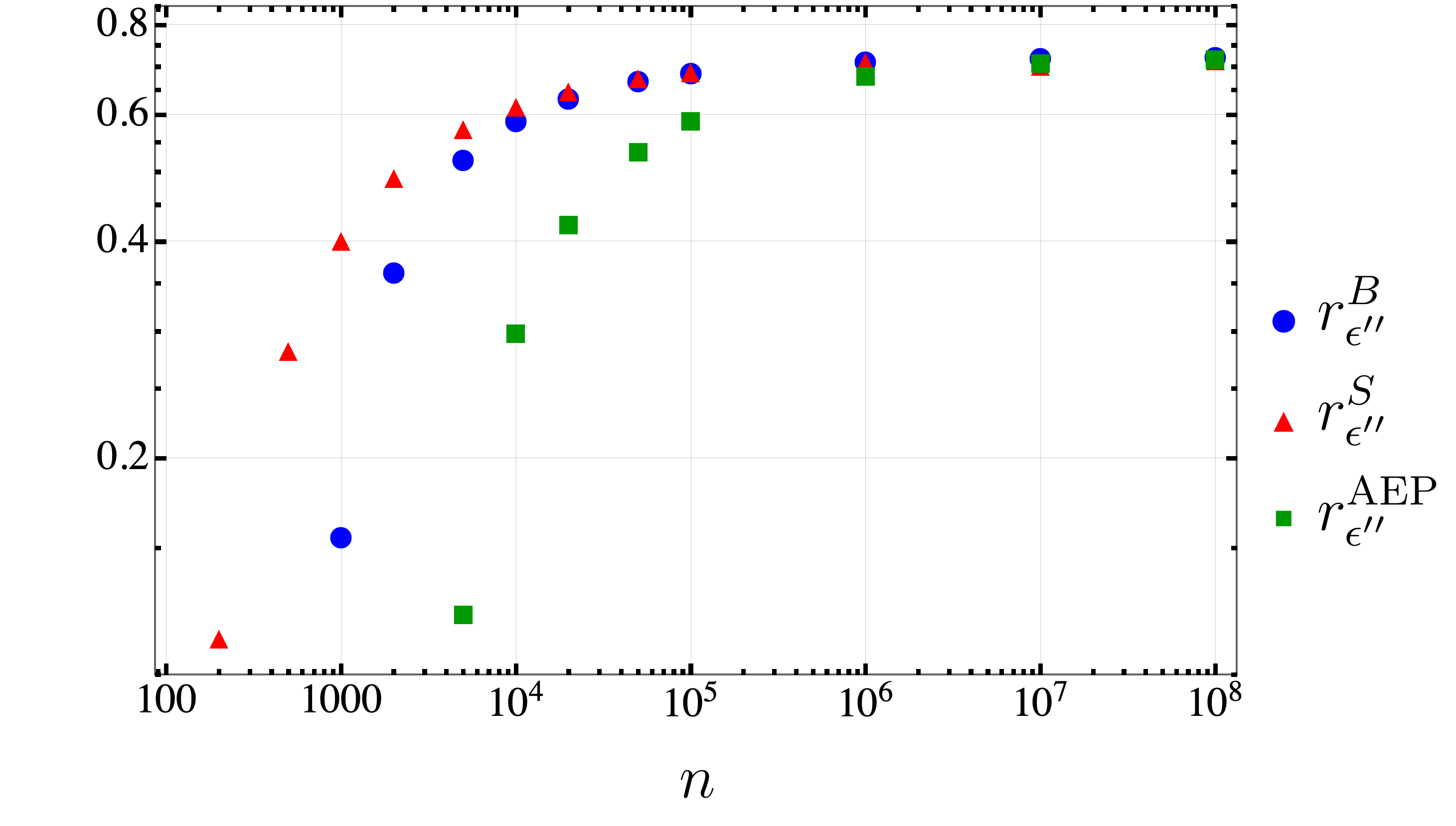}
        \label{rateQ}
    }
    \caption{\justifying Secret-key rate estimates versus the number of signals $n$, for a fixed value of the transmittance at $\eta=0.9$, $\epsilon= \epsilon'=10^{-8}$. The plot compares the bounds $r_{\epsilon''}^{S}$, $r_{\epsilon''}^{B}$, $r_{\epsilon''}^{AEP}$ respectively in  (\ref{ratesandBPSK}), (\ref{rateboundBPSK}) and   \ref{rateAEPBPSK}) with $\epsilon'' = \epsilon+\epsilon'$, for BPSK in \ref{rateBPSK} and for QPSK in \ref{rateQ}. The estimators $r_{\epsilon''}^{S}$ and $r_{\epsilon''}^{B}$  are optimized over the parameters $a$ and $\alpha$, while $r_{\epsilon''}^{AEP}$ is optimized over $\alpha$ alone. Asymptotically, the quantities converge to the same value, while in the finite-size region $r_{\epsilon''}^{S}$ is the tightest bound, being the only one still non-vanishing when $n<10^3$, and dropping to zero as $n < 5\cdot10^2$.
    }
    \label{fig:overallrate}
\end{figure}
\section{Pure loss scenario: estimates for the secret-key rate for N-PSK protocol}
\label{keyrateBPSK}

Assuming collective attacks performed by the eavesdropper on a $n$-round protocol, we compare different estimates for the key rates, obtained by bounding the number of secret bits $\ell_{\epsilon''}$ as in Eq.~(\ref{hash}) in three different ways.
We label the three bounds with the index $W$, where:
\textit{(i)} $W=\mathrm{S}$ for the bound obtained directly from the sandwiched Rényi entropy in Eq.~\eqref{RateSand};
\textit{(ii)} $W=\mathrm{AEP}$ for the bound obtained from the AEP in Eq.~\eqref{RateAEP}; 
and \textit{(iii)} $W=\mathrm{B}$ for the bound obtained from the second-order espansion in Eq.~\eqref{RateBound}.

We also need to consider the number $\ell_{leak}$ of bits that leak to the environment during the routine of error correction. In reverse reconciliation, Alice reconstructs Bob's key string starting from the string in her possession and Bob has to transmit a certain amount of error-correcting information. 
For simplicity we estimate $\ell_{leak}$ using the asymptotic, information-theoretic bound given by the conditional Shannon entropy.
Let us denote as $X$ the uniformly distributed random variable representing Alice's choice for state preparation. Then information theory quantifies the asymptotic error-correction information needed by Alice in terms of the conditional Shannon entropy \cite{SlepianWolf}
\begin{align}
    H_N(Y|X)=&\sum_{k=0}^{N-1}p(k)H(Y|X=k)\nonumber\\
    =&-\sum_{k=0}^{N-1}p(k)
   \sum_{y=0}^{N-1}p(y|k)
    \log\,
    p(y|k)
    \, .
\end{align}
Hence, if we consider $n$ rounds of the protocol, the total amount of information leakage will be $\ell_{leak} = nH_N(Y|X)$.

We compare the following estimates of the secret-key rates, according to (\ref{hash}), with $\epsilon'' = \epsilon+\epsilon'$,
$r_{\epsilon''}^W = (\ell_{\epsilon''}^W -\ell_{leak})/n\blk$, for $W$ that distinguish the quantities $\mathrm{W}=\mathrm{S},\mathrm{AEP},\mathrm{B}$.
\begin{enumerate}
    \item From the optimized Sandwiched Rényi $a$-entropy $\tilde{H}_a^\uparrow(Y|E)$, ($a>1$) (\ref{RateSand}),
    \begin{align}
        r_{\epsilon''}^S \ge
        \tilde{H}_a^\uparrow(Y|E)
        +\frac{ 1+ 2 \log{\epsilon'}}{n}  \,  
        -\frac{g(\epsilon)}{n(a-1)}
        -H_N(Y|X);
        \label{ratesandBPSK}
    \end{align}
    \item From the AEP (\ref{RateAEP}),
    \begin{align}
        r_{\epsilon''}^{AEP}\ge
        H(Y|E) 
         +\frac{ 1+ 2 \log{\epsilon'}}{n} \,
        - \frac{\delta(\epsilon)}{\sqrt{n}} 
        -H_N(Y|X)\;;
         \label{rateAEPBPSK}
    \end{align}
    \item From the von Neumann entropy continuity bound ($a\in(1,2]$) (\ref{RateBound}),
    \begin{align}
        r_{\epsilon''}^{B} \ge B_a(Y|E) 
         +\frac{ 1+ 2 \log{\epsilon'}}{n} \,
        -\frac{g(\epsilon)}{n(a-1)}
        -H_N(Y|X)\; .
         \label{rateboundBPSK}
    \end{align}
\end{enumerate}
\subsection{Estimates for the secret-key rate for BPSK protocol}
For the BPSK case ($N=2$), we have
\begin{align}
    \frac{\ell_{leak}}{n} &= H_2(Y|X) = \sum_{k=0,1}p(k)H(Y|X=k)\nonumber\\
    &=-\sum_{k=0,1}p(k)
    \sum_{y=0,1}p(y|k)
    \log\,
    \left[
    p(y|k)
    \right]\nonumber\\
    &=- \sum_{y=0,1}p(y|0)
    \log\,
    \left[
    p(y|0)
    \right]
    =- \sum_{y=+,-}p_y
    \log\,
    (p_y).
\end{align}
with $p_\pm =\frac12[1\pm \text{erf}(\sqrt{2\eta} \, \alpha)]$.

We show in Fig.~\ref{rateBPSK} the three different estimations of the key rate, computed for a fixed value of $\eta=0.9$, plotted against $n$; 
here we set $\epsilon = \epsilon'= 10^{-8}$, and thus 
$\epsilon'' = 2 \times 10^{-8}$. 
In particular the rates $r_{\epsilon''}^S$ and $r_{\epsilon''}^B$ are optimized over the parameters $a$ and $\alpha$, while $r_{\epsilon''}^{AEP}$ is optimized only over $\alpha$, not being depending on the Rényi parameter $a$. We can see that asymptotically the three quantities reach all the same value which is around $r\sim0.45$; in fact in this regime all the finite-size correction vanish and the optimization brings $a\to1$, both for $\tilde{H}_a^\uparrow(Y|E)$ and $B_a(Y|E)$ which clearly tend to $H(Y|E)$. When $n$ gets smaller, \textit{i.e.}~around $n\sim10^6$, $ r_{\epsilon''}^{AEP}$ is the first quantity dropping to zero because of the worse correction term $\sim1/\sqrt{n}$ and in particular we see that $ r_{\epsilon''}^{AEP} >0$ only for $n>10^4$. When the number of signals gets even smaller, \textit{i.e.}~$n<10^4$, we observe $r_{\epsilon''}^{S}>r_{\epsilon''}^{B}$ revealing the optimized sandwiched Rényi entropy as the best estimator of $\ell_{\epsilon''}$ in this regime. A very important fact about $r_{\epsilon''}^{S}$ is that it is the only non-vanishing estimation of the key rate when $n<10^3$, assuming still non-zero value $r_{\epsilon''}^{S}\sim 0.09$ for $n\sim500$. Finally we mention the fact that the optimization brings the amplitude  $\alpha\in [0.93,0.99]$ for $r_{\epsilon''}^{S}$,  $\alpha\in [0.89,0.95]$ for $r_{\epsilon''}^{B}$, while the optimal value for  $r_{\epsilon''}^{AEP}$ is always around $\alpha\sim0.95$. As for the Rényi parameter $a$, the optimal values get larger and larger as $n$ decreases, varying in the range $\alpha\in(1,1.25]$. Only in the case of $r_{\epsilon''}^{S}$, when $n$ becomes very small, the optimal value of $a$ rises sharply, indicating the transition to a regime where the min-entropy provides the optimal bound.
This transition is shown in Fig.~\ref{fig:trans}.

\begin{figure}
    \centering
    \includegraphics[width=\linewidth]{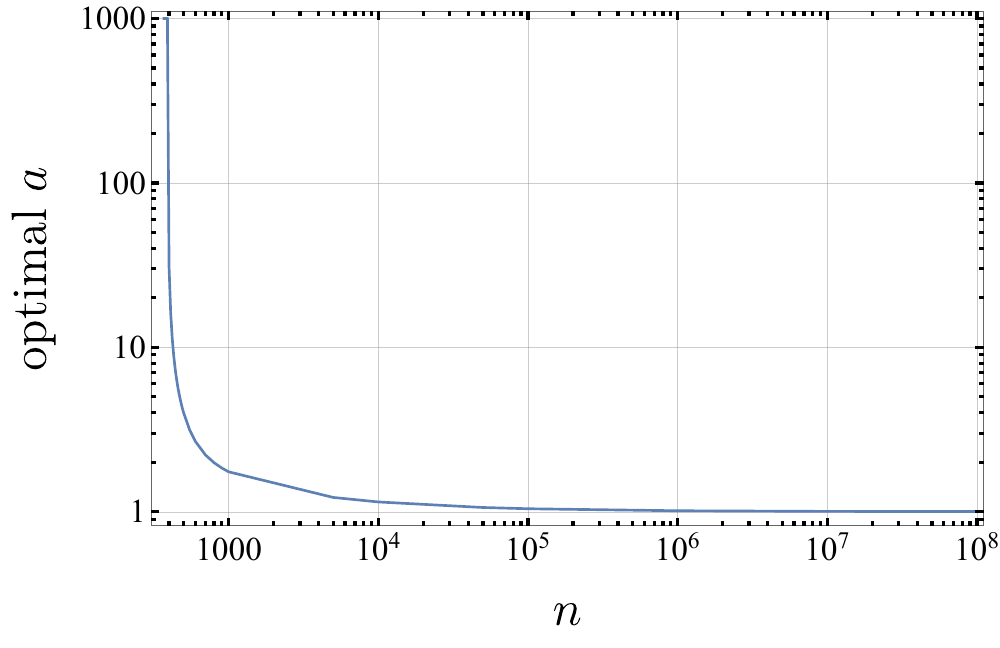}
    \caption{\justifying Optimal value of $a>1$ for the estimator $r_{\epsilon''}^S$ based on a direct calculation of the sandwiched Rényi entropy. See other numerical details in the \texttt{Mathematica} code available on this github folder~\cite{gioscaNPSK}.}
    \label{fig:trans}
\end{figure}

\subsection{Estimates for the secret-key rate for QPSK protocol}

The three different estimations of the secret-key rate in Eqs.~\eqref{ratesandBPSK}-\eqref{rateboundBPSK} for the QPSK protocol 
are computed on the CQ state \eqref{CQ_qpsk} and is used $H_{N=4}(Y|X)$ to estimate $\ell_{leak}$. Exploiting the conditional probability distribution in \eqref{pkj}-\eqref{matrix} we find
\begin{align}
     \frac{\ell_{leak}}{n} = H_4(Y|X) 
    &=-2\left[
    (P_+)\log (P_+) + (P_-)\log (P_-) 
    \right] \, .
\end{align}

In Fig.~\ref{rateQ} we show a comparison between  
$r_{\epsilon''}^{S}$, $r_{\epsilon''}^{B}$ and $r_{\epsilon''}^{AEP}$, plotted against the number of signals $n$; 
here again we set $\epsilon= \epsilon'=10^{-8}$. 
An optimization over the parameters $\alpha$ and $a$ was performed on these quantities, while keeping fixed the value of the transmittance at $\eta = 0.9$. The trends for the QPSK protocol are quite identical to the one seen for the BPSK protocol, both in the asymptotic regime and in the finite-size region. Some differences arise if we look at the optimal parameters, in particular in the optimal values of the amplitude $\alpha$; indeed for $r_{\epsilon''}^{S}$ the optimal values fall in the range $\alpha\in[1.58,1.67]$, for $r_{\epsilon''}^{B}$ in $\alpha\in[1.60,1.66]$, while for $r_{\epsilon''}^{AEP}$ the optimal amplitude is always around $\alpha \sim 1.66$. Comparing these values with the ones obtained in the BPSK case, we see that a higher amplitude for the coherent states is needed to achieve an optimal secret-key rate in the QPSK protocol.
Analogously to what happens in the BPSK protocol, the optimal values of the amplitude get larger asymptotically, while the converse is true for the optimal values of the Rényi parameter $a$, which get smaller for larger $n$, approaching $a\to1$ as $n$ increases. Again, just as we said in the BPSK case, the optimal values of $a$ rises sharply as $n$ becomes very small for $r_{\epsilon''}^{S}$, setting the transition to regime where the min-entropy provides the optimal bound.

\section{Thermal noise scenario: estimates for the secret-key rate for N-PSK protocol}

In this section we extend the finite-size analysis of Sec.~\ref{keyrateBPSK} to the case where the physical channel between Alice and Bob is a thermal-noise channel, described in Appendix \ref{sec:thermal}. The thermal-noise channel with transmittance $\eta$ and mean thermal photon number $\mu$ can be represented by mixing Alice’s signal mode with one arm of a two-mode squeezed vacuum state $|\Psi\rangle_{E_1 E_2}$ at a beam splitter of transmittance $\eta$. Bob receives the transmitted mode and performs homodyne or heterodyne detection, while Eve keeps both environmental modes. In this scenario Eve’s relevant quantum memory is therefore the two-mode system $E = E_1 E_2$. 

An important difference with respect to the pure-loss case is that, in the presence of thermal noise, the conditional states $\rho_{E|x}$ are no longer pure. This prevents us from applying the orthonormalization procedure exploited in Sec.~\ref{sec:CondEnt} to obtain closed expressions for the conditional entropies. Instead, all the entropic functionals inside our bounds have been evaluated  numerically. Since    Eve's Hilbert space is now infinite-dimensional, we have to introduce a photon-number cutoff and work with a truncated Fock basis for each of the two modes $E_1$ and $E_2$. In practice, we fix a cutoff $n_c$ and represent all the states on the finite-dimensional subspace
\begin{align}
    \mathcal{H}_E^{(n_c)} = \mathrm{span}\big\{\,|n_1\rangle_{E_1} \otimes |n_2\rangle_{E_2} : 0 \leq n_1,n_2 \leq n_c \,\big\},
\end{align}
so that $\dim \mathcal{H}_E^{(n_c)} = (n_c+1)^2$. By contrast, the classical conditional entropy $H_N(Y|X)$ that describes error correction remains available in closed form, since the classical channel $p(y|x)$ in the thermal case is still given explicitly in Appendix \ref{sec:thermal}. In particular, for BPSK we  generalized homodyne statistics in presence of thermal noise in Eq.~(\ref{thermalhomo}), while for QPSK we use the expressions in Eq.~(\ref{thermalhetero}), which generalize Eq.~(\ref{matrix}) by replacing $P_{\pm}$ with $P_\pm^{(\mu)}$.

In our bounds for the key rate we have to account for the fact that all these entropic quantities are computed on a truncated version of Eve’s Hilbert space. To obtain valid bounds for the original infinite-dimensional model, we combine our finite-size analysis with a dimension-reduction argument in the spirit of the approach developed in Ref.~\cite{FlorianPRXQ2023}. Given a cutoff $n_c$, this argument provides an upper bound $\omega$ on the weight of the state outside the truncated subspace $\mathcal{H}_E^{(n_c)}$. In the thermal-noise scenario we focus on two finite-size estimators of the secret-key rate, one based on the sandwiched Rényi conditional entropy and one based on the asymptotic equipartition property (AEP). They are structurally identical to the bounds $r_{\epsilon''}^{S}$ and $r_{\epsilon''}^{AEP}$ introduced in Eqs.~(\ref{ratesandBPSK})–(\ref{rateAEPBPSK}) for the pure-loss case, but they include an additional penalty term arising from the dimension reduction. Moreover, for the sandwiched bound we now use the non-optimized entropy $\tilde H_a^\downarrow(Y|E)$ rather than its optimized counterpart $\tilde H_a^\uparrow(Y|E)$, in order to avoid an extra optimization over states on the enlarged two-mode system $E_1E_2$, which would be numerically demanding. With these conventions, for a generic $N$-PSK protocol we consider two different bounds.
\begin{enumerate}
    \item From the Sandwiched Rényi $a$-entropy $\tilde{H}_a^\downarrow(Y|E)$, ($a>1$) (it represents a relaxation (\ref{RateSand}) to non-optimized scenario),
    \begin{align}
    \label{eq:rSdown_omega}
    r_{\epsilon'',\,\omega}^{SD} \ge\;&
        \tilde{H}_a^\downarrow(Y|E)
        +\frac{ 1+ 2 \log{\epsilon'}}{n}  
        -\frac{g(\epsilon)}{n(a-1)}\\
        &-H_N(Y|X)\notag
        -\Delta(\omega),
\end{align}
\item From the AEP (\ref{RateAEP}),
\begin{align}
\label{eq:rAEP_omega}
    r_{\epsilon'',\,\omega}^{AEP} \ge\;&
        H(Y|E) 
        +\frac{ 1+ 2 \log{\epsilon'}}{n}
        - \frac{\delta(\epsilon)}{\sqrt{n}} \\
        &-H_N(Y|X)\notag
        -\Delta(\omega),
\end{align}
\end{enumerate}
The term
\begin{equation}
    \Delta(\omega) = \sqrt{\omega}\,\log_2(N)
    +\big(1+\sqrt{\omega}\big)\,
      h_2\!\left(\frac{\sqrt{\omega}}{1+\sqrt{\omega}}\right)
    \label{eq:Delta_omega}
\end{equation}
where $h_2$ is the Shannon binary entropy, is the correction induced by the dimension reduction technique (see Appendix~\ref{sec:weight}).

\subsection{Estimates for the secret-key rate for BPSK protocol}

In Fig.~\ref{fig:comparison_distance_BPSK_d} 
we show a comparison between $r_{\epsilon'',\,\omega}^{SD}$ and $r_{\epsilon'',\,\omega}^{AEP}$ computed for some fixed values of the block size $n=10^7,10^9$, against the transmission distance $d$ measured in $km$. We recall that once the specific attenuation $\zeta$ of the transmission medium is specified in $dB/km$, then the transmittance $\eta$ is given by 
\begin{align}
    \eta = 10^{-\zeta d/10}.
    \label{DistTrans}
\end{align} 
Here we set $\zeta = 0.2 ~dB/km$, and again $\epsilon = \epsilon' = 10^{-8}$. We set $\alpha = 0.5$ for the coherent amplitude and $a=1.05$ as the Rényi parameter for $r_{\epsilon'',\,\omega}^{SD}$. All the key rates are computed for an excess noise of $\xi = 0.01$. It is manifest how the key rate computed through Rényi entropy is much more robust against losses. If we consider a block size $n=10^7$ $r_{\epsilon'',\,\omega}^{AEP}$ is already vanishing at $d\sim60~km$, while $r_{\epsilon'',\,\omega}^{SD}$ endures until $d\sim 140~km$. For a greater block size $n=10^9$, $r_{\epsilon'',\,\omega}^{AEP}$ is able to keep up with $r_{\epsilon'',\,\omega}^{SD}$ until around $d\sim 80~km$, but then experiences a drop soon after $d\sim 100~km$. We remark that for the same block size, the $r_{\epsilon'',\,\omega}^{SD}$ is still non-vanishing even for $d>200~km$. An equivalent study is shown in Fig.~\ref{fig:comparison_eta_BPSK} where the key rates are plotted against the transmittance $\eta$ in place of the distance $d$. Here we take into consideration an even smaller block size of $n=10^5$; we can note that, as we could expect from the pure loss case, when the block size decreases, the advantage of the Rényi estimator w.r.t. the AEP one becomes more and more manifest.

Complementary results are shown in Fig.~\ref{fig:comparison_distance_BPSK_blocksize}, where the two different key rates in Eqs.~(\ref{eq:rSdown_omega}) and (\ref{eq:rAEP_omega}) are plotted against the block size $n$ for a fixed value of distance $d=10~km$; all the other parameters of the protocol are set like in the former analysis. Here a direct comparison with the pure-loss scenario can be done by looking at Fig.~\ref{rateBPSK}. Looking at the region of large block sizes first, we observe that the key rates saturates around $r_{\epsilon'', \omega}\sim 0.14$, which is less than a third of the value obtained for a pure-loss channel; the fact that asymptotically the values of $r_{\epsilon'',\,\omega}^{SD}$ and $r_{\epsilon'',\,\omega}^{AEP}$ do not converge can be ascribed to the lack of the optimization on the Rényi parameter $a$, that should decrease more and more as $n\to\infty$. Looking at the region of small block sizes, we note that the thermal noise forces the key rates to drop earlier to zero; for $r_{\epsilon'',\,\omega}^{SD}$ it happens around $n\sim 10^4$ which clearly represents a worsening w.r.t. the loss-only scenario.

\subsection{Estimates for the secret-key rate for QPSK protocol}

In this section we present the results regarding QPSK protocol, comparing them with Ref~\cite{FlorianPRXQ2023}. 
Though we do not perform the comprehensive finite-size analysis presented by Kanitschar et al., we adopt their testing ratio parameter $r_{test}$ to enable a consistent comparison with their results; specifically, $r_{test}$ represents the fraction of the total transmitted signals $n$ that are sacrificed for statistical testing, such as the energy and acceptance tests, rather than being utilized for key generation. In Eqs.~(\ref{eq:rSdown_omega}) and (\ref{eq:rAEP_omega}) this can be visualized as an overall multiplicative factor $(1-r_{test})$ for all the quantities on the RHS, but $\frac{1}{n} \left( 1+2\log\epsilon' \right)$. 
In Fig.~\ref{fig:comparison_distance_QPSK_blocksize} we show the secret-key rates vs the distance, taking into consideration only $r_{\epsilon'',\,\omega}^{SD}$, for fixed block sizes; like in BPSK, we choose fixed values for the protocol parameters $\alpha=0.85$ and $a=1.05$; the test ratio parameter is set at $r_{test} = 0.1$. One reference curve from~\cite{FlorianPRXQ2023} corresponding to $n=10^9$ is also drawn for comparison.
The advantage of the Rényi estimator is great; by the way our work relies on the analysis of a specific known gaussian attack; in contrast, Kanitschar et al.~minimize the functional of interest subject to experimental constraints—namely, the observed moments of the observables and the bounds imposed by the energy test and it is this distinct optimization methodology that primarily accounts for the difference in key rates. This same argument is valid in the analysis of Fig.~\ref{fig:comparison_distance_QPSK_blocksize}, where both $r_{\epsilon'',\,\omega}^{SD}$ and $r_{\epsilon'',\,\omega}^{AEP}$ are shown against the block size, for a fixed distance of $d=10~km$. Here the reference curve is not plotted, but with the same values of the protocol parameters, an asymptotic value of $r_{\epsilon'',\omega}\sim 0.13$ is found in~\cite{FlorianPRXQ2023} (fig.~2); in contrast our results show slightly higher values in the range $[0.17,0.185]$. As for the small block size regime, we note that our secret-key rates endure better, while the ones in~\cite{FlorianPRXQ2023} drop to zero around $n\sim10^9$. Finally in Fig.~\ref{fig:comparison_eta_QPSK}, we show the secret-key rates against the transmittance; the trends are very similar to the BPSK case.

\begin{figure}
    \centering
    \subfloat[BPSK ($\alpha=0.5,\,a=1.05$).]{
        \includegraphics[width=\linewidth]{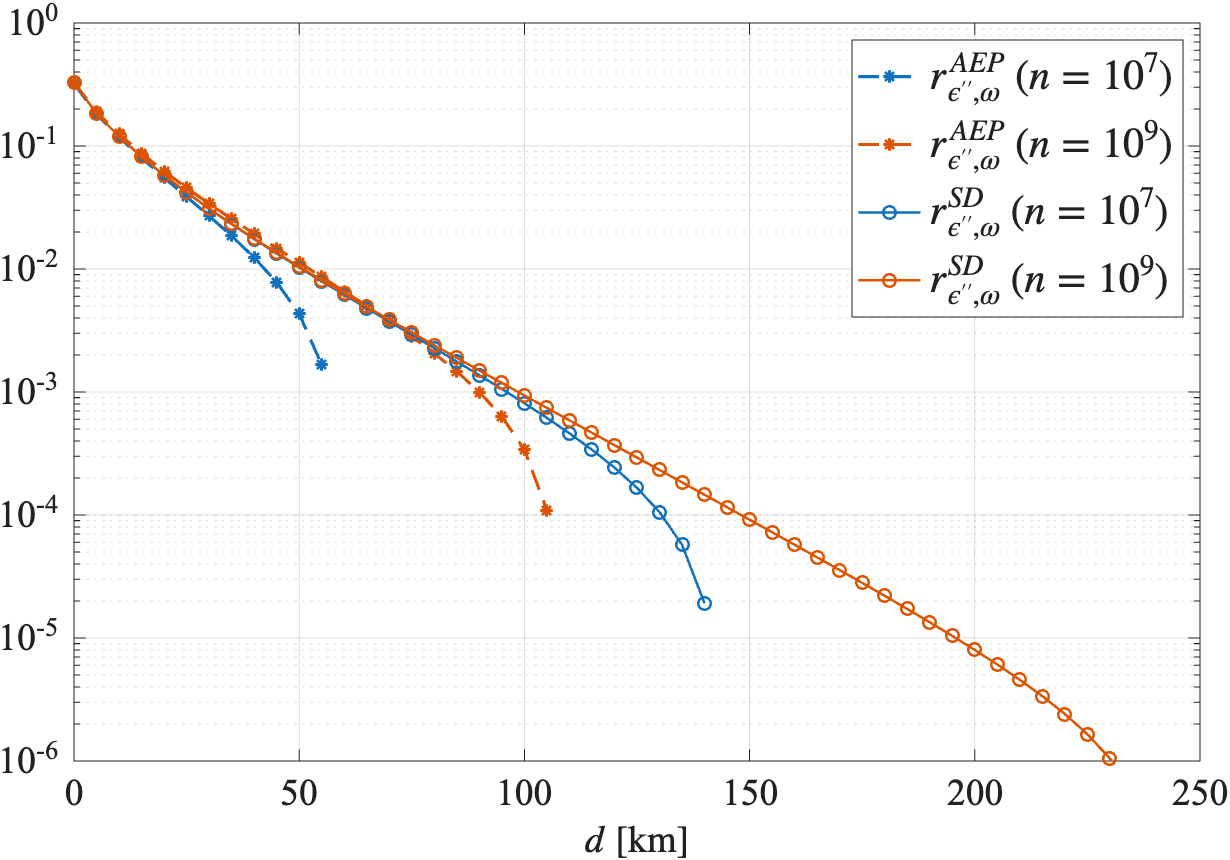}
        \label{fig:comparison_distance_BPSK_d}
    }\\
    \subfloat[QPSK protocol ($\alpha=0.85,\,a=1.05$, $r_{test}=0.1$).]{
        \includegraphics[width=\linewidth]{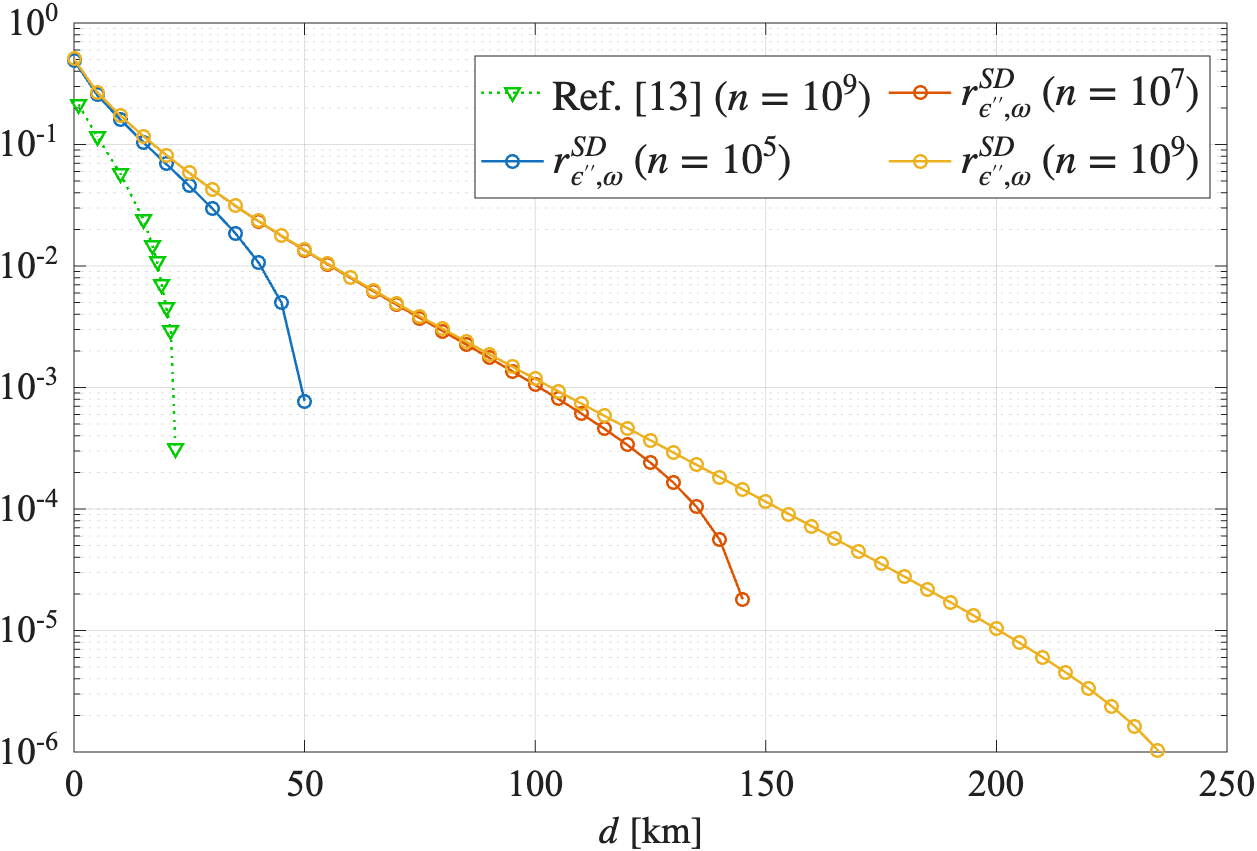}
        \label{fig:comparison_distance_QPSK_d}
    }
    \caption{\justifying Secret-key rates computed on the CQ states at $\alpha$ and $a$ on $\rho_{YE}$ in Eqs.~(\ref{rhoYEnoiseB}) and (\ref{rhoYEnoiseQ}) respectively for BPSK in \ref{fig:comparison_distance_BPSK_d} and in \ref{fig:comparison_distance_QPSK_d} for QPSK protocol, in term of the distance. For QPSK an additional curve is added that reproduce the results in \cite{FlorianPRXQ2023}.}
    \label{fig:comparison_distance}
\end{figure}

\begin{figure}
    \centering
    \subfloat[BPSK ($\alpha=0.5,\,a=1.05$).]{
        \includegraphics[width=\linewidth]{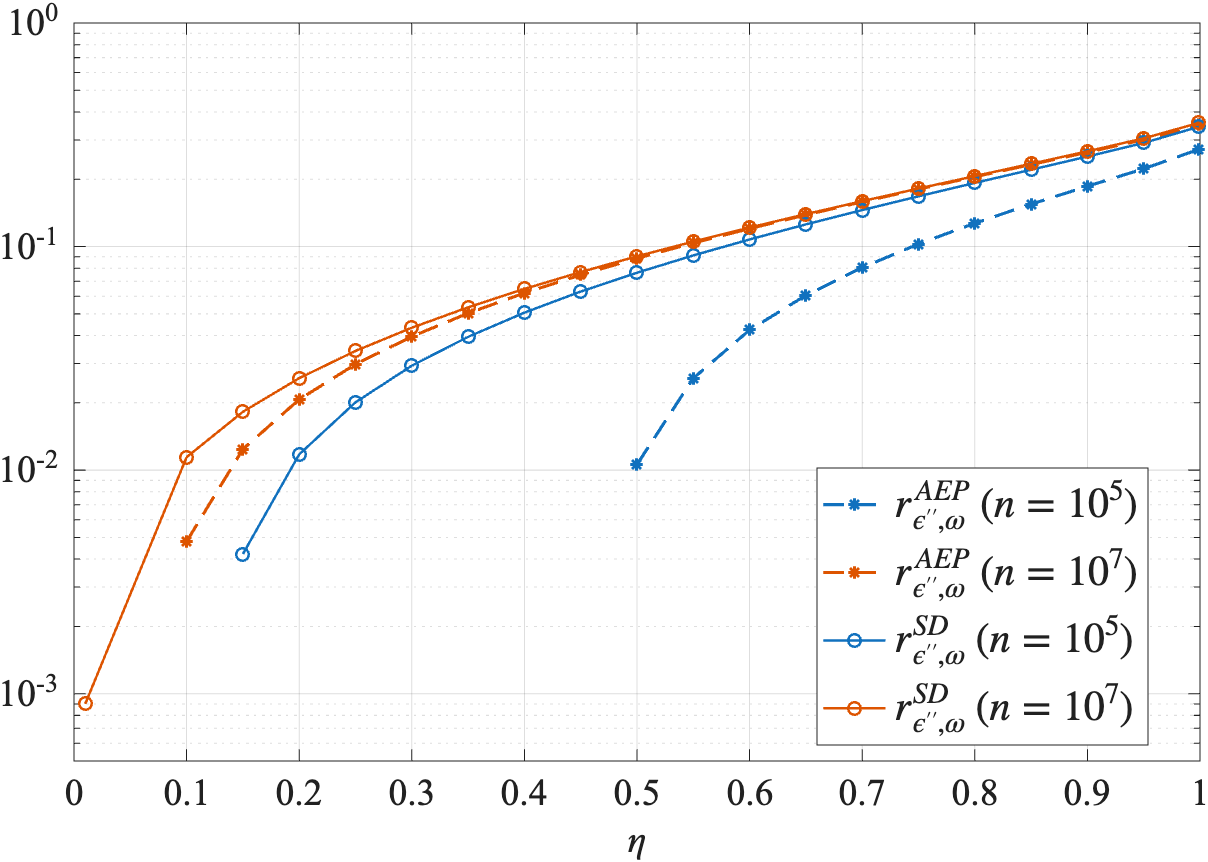}
        \label{fig:comparison_eta_BPSK}
    }\\
    \subfloat[QPSK protocol ($\alpha=0.85,\,a=1.05$).]{
        \includegraphics[width=\linewidth]{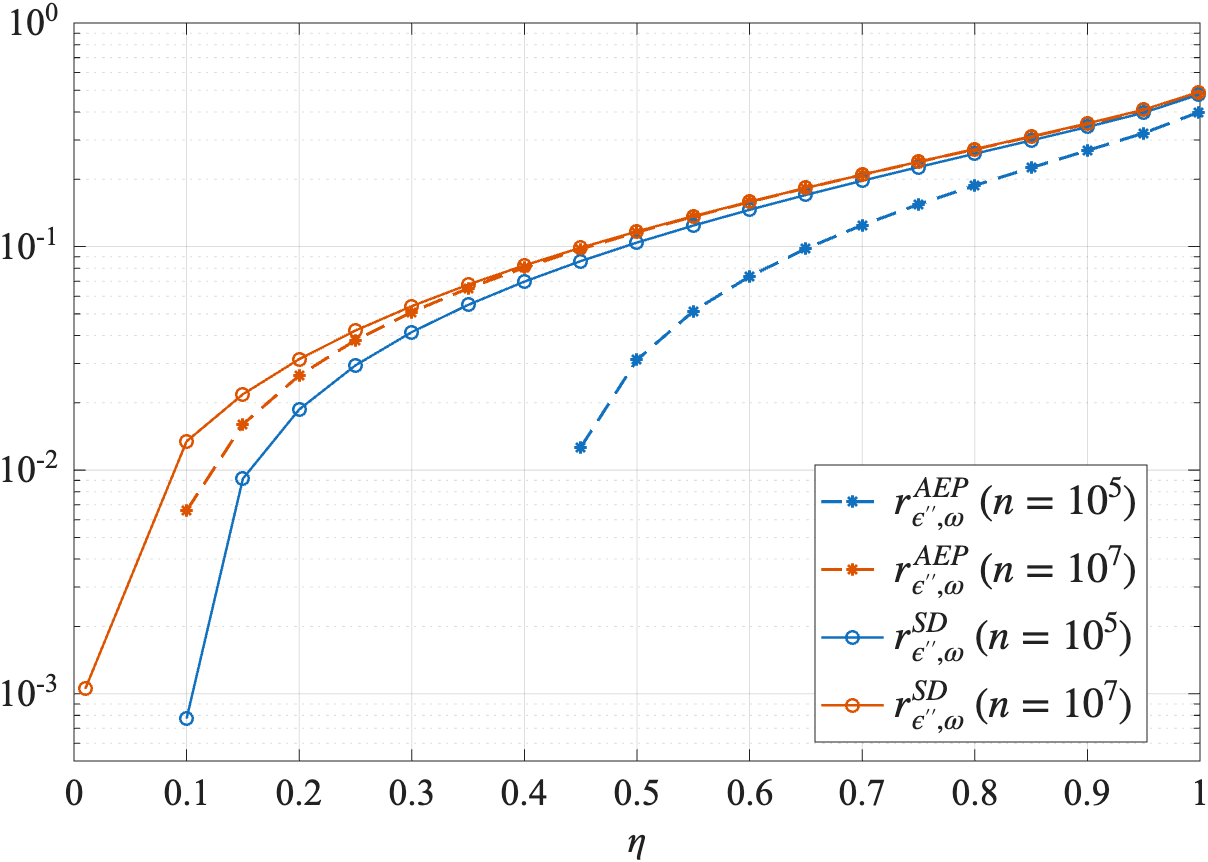}
        \label{fig:comparison_eta_QPSK}
    }
    \caption{\justifying  Secret-key rates computed on the CQ states at $\alpha$ and $a$ on $\rho_{YE}$ in Eqs.~(\ref{rhoYEnoiseB}) and (\ref{rhoYEnoiseQ}) respectively for BPSK in \ref{fig:comparison_eta_BPSK} and in \ref{fig:comparison_eta_QPSK} for QPSK protocol, in term of the transmittance $\eta$.}
    \label{fig:comparison_eta}
\end{figure}

\section{Related Work}\label{relatedW}

Recently, DM CV QKD has made significant progress both experimentally and theoretically. High-speed tests using larger constellations have helped clarify how performance changes with distance, measured in terms of secret bits per channel use. 
For example, Ref.~\cite{Pan2022OL} shows asymptotic secret-key rates at $1$~Gbaud system with discrete-Gaussian 64/256-QAM of about ($\sim300$~Mbit/s, $5$km) to $\sim7$~Mbit/s,$50$km);
% $288.421/326.708$~Mbit/s at $5.059$~km, $159.395/179.348$~Mbit/s at $10.314$km, $50.004/50.623$~Mbit/s at $24.490$~km, and $7.579/9.212$~Mbit/s at $50.592$~km. When normalized by the symbol rate, these correspond to roughly $\approx0.288/0.327$, $0.159/0.179$, $0.0500/0.0506$, and $0.00758/0.00921$~bit per symbol, respectively. 
%
Using a smaller constellation (16 symbols, 2~rings) at $2.5$~Gbaud, Ref.~\cite{Tian2023OL} obtain ($49$~Mbit/s, $25$ km) to ($2.11$~Mbit/s, $80$~km).
%Using a smaller constellation (16 symbols, 2~rings) at $2.5$~Gbaud, Ref.~\cite{Tian2023OL} Tian \emph{et al.}~obtain $49.02$, $11.86$ and $2.11$~Mbit/s over $25$, $50$ and $80$~km, i.e., $\approx0.0196$, $0.00474$ and $8.44\times10^{-4}$~bit/symbol. 
%
Roumestan \emph{et al.}~demonstrate DM formats and report \emph{tens of Mbit/s} at $25$~km~\cite{Roumestan2022arXiv}. QPSK realization has recently achieved \emph{composable, finite-size} security against collective attacks over $20$~km with $N\simeq2.3\times10^{9}$ signals and a positive key fraction of $11.04\times10^{-3}$~bit/symbol using heterodyne detection and an advanced finite-size proof stack~\cite{TobiasDM2024}. These results collectively set realistic targets for protocol design at metropolitan scales.

\emph{Theory side.} On the finite-size composable front with realistic detection, Ref.~\cite{Lupo2022} establishes composable security for DM CV QKD under nonideal heterodyne detection (collective attacks), delivering tight numerical bounds without Hilbert-space truncations.
For phase-shift keying specifically,  Ref.~\cite{kanitschar2022optimizing} presents an optimized scheme (including post-selection), providing tight asymptotic benchmarks relevant to both homodyne and heterodyne decoding, while Ref.~\cite{Liu2021PRXQ} analyzes QPSK showing enhanced tolerance to noise. 
Beyond collective attacks, finite-size security of QPSK DM CV-QKD is established in Ref.~\cite{AcinQuantum2024} against \emph{coherent} attacks, and in Ref.~\cite{PrimaatmajaNPJQI2025,PascualGarciaPRA2025} against sequential attacks.
%
%has been developed against \emph{coherent} attacks, via the EAT, in Ref.~\cite{AcinQuantum2024}, and against sequential attacks in Ref.~\cite{PrimaatmajaNPJQI2025,PascualGarciaPRA2025}.
%, but with a different security proof technique that we presented in this work. 
%
For asymptotic setting, a tight security analysis for DM is presented in Ref.~\cite{Ghorai2019}, and an SDP-based asymptotic framework for DM (including QPSK) is discussed in~\cite{Lutk2019}. An explicit asymptotic key-rate formula for \emph{arbitrary} DM is in Ref.~\cite{Denys2021}. %However, these works are not easy to compare 
%\textit{à la apples-to-apples}  
%with our conditional-entropy approach for PSK under passive (pure-loss) assumptions. 

%
In Ref.~\cite{FlorianPRXQ2023}, an energy test combined with dimension reduction was specifically adapted for DM CV QKD, enabling finite-size AEP key rates for QPSK that remain positive under realistic experimental conditions (reported up to about $72$~km for i.i.d.~collective attacks (see their Figs.~2 and~3).
To make a fair comparison, in Appendix~\ref{sec:thermal} we explain how to turn the passive into a thermal attack, a.k.a.~\textit{excess noise} in a way similar to Ref.~\cite{FlorianPRXQ2023}, with one key difference: in our case, the expressions for $\rho_{YE}$ in Eq.~\eqref{rhoYEnoiseB} and~\eqref{rhoYEnoiseQ} are assumed to be known.
A direct comparison is still not possible because we do not use the optimal state that maximizes Eve’s information, nor do we optimize over the parameters $a$ and $\alpha$.
Nevertheless, the results shown in Figs.~\ref{fig:comparison_distance} and~\ref{fig:comparison_blocksize} for both BPSK and QPSK reveal an interesting advantage when the sandwich conditional entropy $\tilde H^{\downarrow}_a$ is applied. 
Since we are not considering the worst-case scenario, our key rates are naturally higher than those in  Ref.~\cite{FlorianPRXQ2023}. However, we expect that the observed difference between AEP and sandwich rates will persist even under worst-case conditions. This is an initial observation that warrants further investigation.

\begin{figure}
    \centering
    \subfloat[BPSK ($\alpha=0.5,\,a=1.05, \,d=10~km$).]{
        \includegraphics[width=\linewidth]{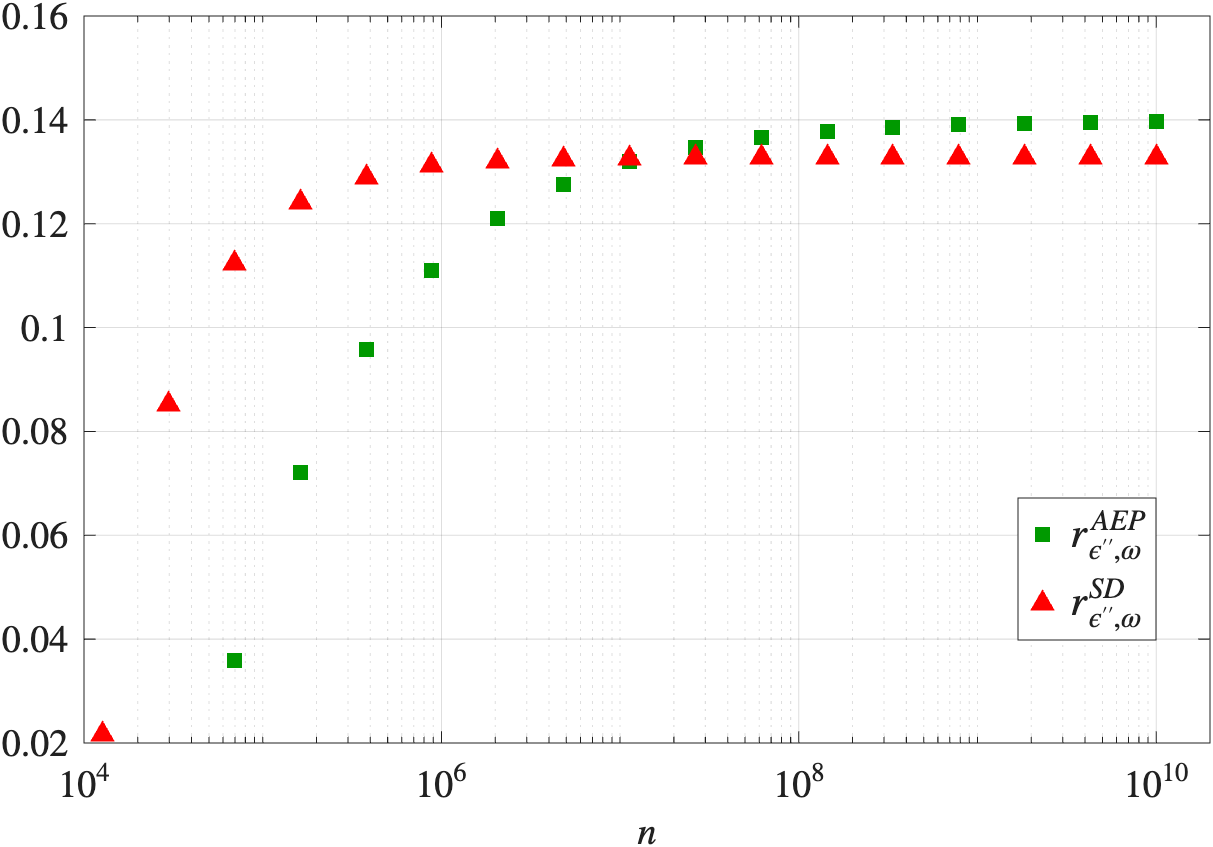}
        \label{fig:comparison_distance_BPSK_blocksize}
    }\\
    \subfloat[QPSK protocol ($\alpha=0.85,\,a=1.05, \,d=10~km$, $r_{test}=0.1$).]{
        \includegraphics[width=\linewidth]{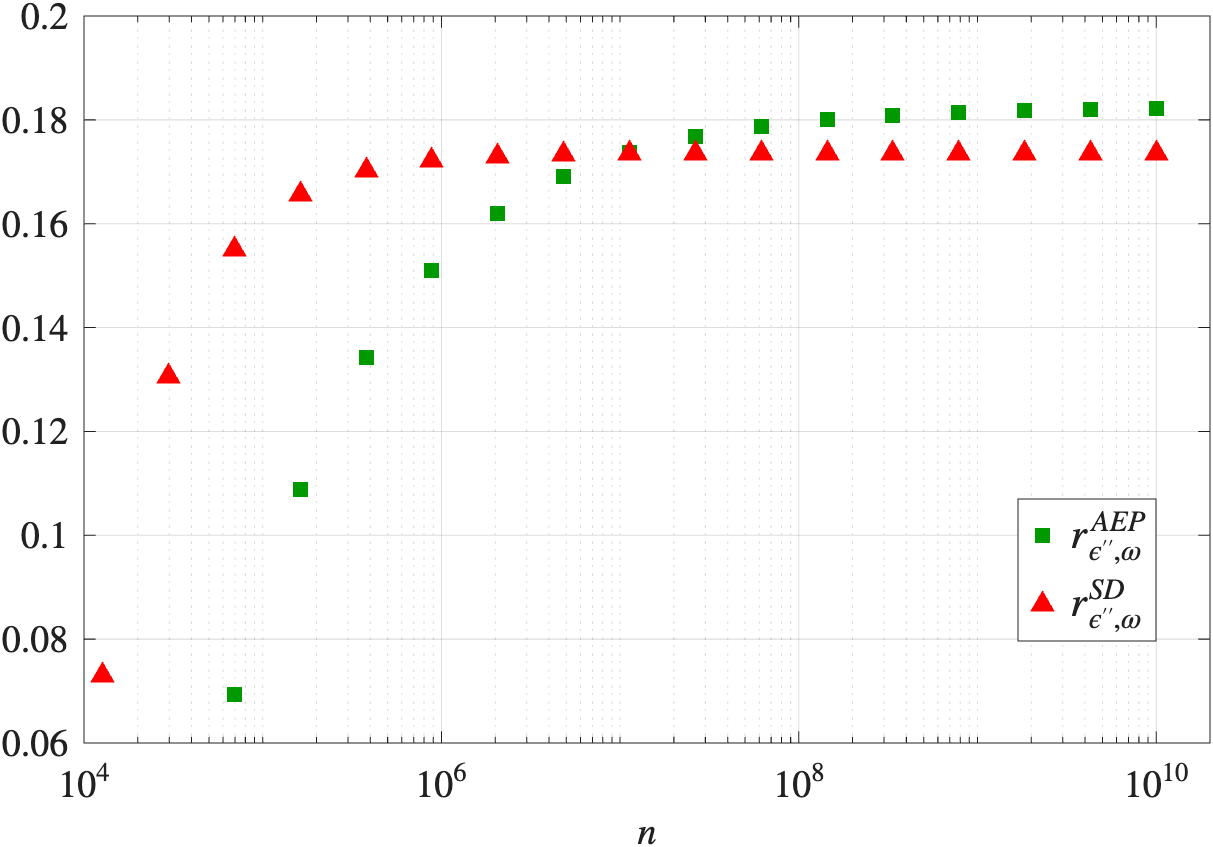}
        \label{fig:comparison_distance_QPSK_blocksize}
    }
    \caption{\justifying Secret-key rates computed on the CQ states at $\alpha$ and $a$ on $\rho_{YE}$ in Eqs.~(\ref{rhoYEnoiseB}) and (\ref{rhoYEnoiseQ}) respectively for BPSK in \ref{fig:comparison_distance_BPSK_d} and in \ref{fig:comparison_distance_QPSK_d} for QPSK protocol, in term of the block size $n$.}
    
    \label{fig:comparison_blocksize}
\end{figure}
\section{Conclusion}
In this work, we have computed and compared several quantum conditional entropies that play a fundamental role in establishing the secrecy of QKD.
Our work focuses on CV QKD protocols exploiting discrete modulation (BPSK and QPSK) of coherent states, and decoding by homodyne or heterodyne detection, a.k.a.~double homodyne.
We assume passive, collective attacks in the pure-loss scenario in Fig.~\ref{fig:overallrate}, and the thermal noise channel 
in Fig.~\ref{fig:comparison_distance},~\ref{fig:comparison_eta}
and~\ref{fig:comparison_blocksize}.
%
%The assumption of passive attacks limits the scope of our analysis. However, 
Our results do provide meaningful ball-park figures, as Gaussian attacks --- which model linear loss and Gaussian noise in optical fibers --- are commonly used to benchmark quantum communication protocols.
%, \gsc and promise interesting avenues for general attacks as can be observed in Fig.~\ref{fig:comparison_distance},~\ref{fig:comparison_blocksize} and~\ref{fig:comparison_eta}, where the key rate obtained with the sandwich outpace the standard AEP key rate analyzed in Ref.~\cite{FlorianPRXQ2023}.\blk

We have considered four notions of quantum conditional entropies, the Petz-Rényi entropy $H_a^\downarrow$, the sandwiched Rényi entropy $\tilde H_a^\downarrow$ and their optimized versions $ H_a^\uparrow$ and $\tilde H_a^\uparrow$~\cite{Petz1986,Wilde2014,Tomamichel2014}, and three different ways to estimate the finite-size secret-key rate. In particular, $r_{\epsilon''}^S$ exploits directly the bound on the smooth min-entropy based on the optimized sandwiched Rényi entropy~\cite{tomamichel2012framework,EAT2020}; $r_{\epsilon''}^B$ follows immediately from the first one, when one bounds by a continuity argument Renyi entropies with the Von Neumann one like in \cite{improved2ndorder}; the third estimate of the secret-key rate $r_{\epsilon''}^{AEP}$ is obtained from a more traditional approach based on asymptotic equipartition property (AEP) \cite{tomamichel2012framework}.

%\sout{Alongside these four entropies, we consider three finite-size estimators of the secret-key rate. }

We computed explicit bounds on the secret-key rates for both BPSK and QPSK protocols. The approach based on the optimized sandwiched Rényi entropy gives the best key-rate estimate, especially for very small block sizes.
This quantity has also been very recently considered by other authors in the context of discrete-variable QKD~\cite{Cai2025, Tan_arXiv2025}. To the best of our knowledge, our work represents the first application within CV QKD.
As a further development, and to make our analysis more complete and reliable, one needs to depart from the assumption of lossy or thermal channels with a known attenuation factor or excess noise. In order to do so, a dual representation of the quantum conditional entropy is required. 
For completeness, and in view of this future development, in Appendix~\ref{dual} we provide the dual expressions of the quantum conditional entropies that are the basic building block to obtain reliable, numerical estimates of the secret-key rates. These entropic functionals can also be exploited within device-independent QKD, as shown in recent works on the connection between Rényi entropies, Bell inequality violations and security proofs~\cite{Hahn2024,Hahn2025}. Moreover, our scheme is naturally compatible with current experimental platforms, and could be tested in setups of the kind reported in Sec.~\ref{relatedW}, thereby linking the theoretical framework with practical implementations.
The data that support the findings of this article are openly available \cite{gioscaNPSK}.

\begin{acknowledgments}
This work has received support by the European Union's Horizon Europe research and innovation programme under the Project ``Quantum Secure Networks Partnership'' (QSNP, Grant Agreement No.~101114043);
and by INFN through the project ``QUANTUM''. The authors acknowledge support from Innovation Fund Denmark (AccessQKD, grant agreement no. 4356-00004B).
\end{acknowledgments}

\appendix

\section{Entropic bounds for the finite-size key rates}
\label{sec2}

Here we briefly review the definitions of the Petz-Rényi and sandwiched Rényi quantum conditional entropies. 

For $a \in (0,1)\cup(1,+\infty)$, let $\rho$ and $\sigma$ be two quantum states with $\rho \ll \sigma$ (here $\rho \ll \sigma$ means that the support of $\rho$ is contained in the support of $\sigma$). 
We consider two different notions of Rényi divergence between states $\rho$ and $\sigma$:  
\begin{itemize}
    \item The Petz-Rényi divergence \cite{Petz1986}:
    \begin{align}
    D_a(\rho \| \sigma) = \frac{1}{a - 1}\log \mathrm{tr}\!\left( \rho^a \sigma^{1-a} \right) \, ;
    \end{align}
    \item The sandwiched Rényi divergence \cite{Wilde2014,Tomamichel2014}:
     \begin{align}
    \tilde{D}_a(\rho \| \sigma) = \frac{1}{a - 1}\log \mathrm{tr}\!\left[ \left( \sigma^{\frac{1-a}{2a}} \rho \, \sigma^{\frac{1-a}{2a}} \right)^a \right] \, .
    \label{SandDownDIV}
     \end{align}
\end{itemize}
We denote always with $\log$ the logarithm base $2$, while $\ln$ is the natural logarithm.  From these divergences, one defines the corresponding conditional Rényi entropies~\cite{Muller}. Let $A$ and $B$ be two quantum systems, and denote by $\mathcal{S}(A)$ the set of density operators describing the states of system $A$. 
For a bipartite state $\rho_{AB}$ with reduced state $\rho_B = \mathrm{tr}_A(\rho_{AB})$, we set:  
\begin{align}
H_a^\downarrow(A|B) 
& = 
    -D_a
    \left(
    \rho_{AB} \big\| 
    \mathbb I_A \otimes \rho_B
    \right) 
     \nonumber\\
& = \frac{1}{1-a} 
    \log \mathrm{tr}
    \left( \rho_{AB}^a \, \rho_B^{1-a} 
    \right),
    \label{PetzDown} 
   \\
\tilde{H}_a^\downarrow(A|B) &=
    -\tilde{D}_a
    \left(\rho_{AB} \big\| 
    \mathbb I_A \otimes \rho_B
    \right) 
     \nonumber \\
& = \frac{1}{1-a} 
    \log \mathrm{tr}
    \left[ 
    \left( \rho_B^{\frac{1-a}{2a}} \rho_{AB} \, \rho_B^{\frac{1-a}{2a}} 
    \right)^a 
    \right].\label{SandDown}
\end{align}

In the above expressions, $\rho_B$ is implicitly understood as $\mathbb I_A \otimes \rho_B$ when it appears inside the trace.  
A natural generalization arises by optimizing over all possible states $\sigma_B \in \mathcal{S}(B)$.  
This leads to the “$\uparrow$”-versions of the entropies:  
\begin{align}
H_a^\uparrow(A|B) 
& = \sup_{\sigma_B \in \mathcal{S}(B)} -D_a\!\left(\rho_{AB} \big\| \mathbb I_A \otimes \sigma_B\right) 
 \label{PetzUp} \\
\tilde{H}_a^\uparrow(A|B) &= \sup_{\sigma_B \in \mathcal{S}(B)} -\tilde{D}_a\!\left(\rho_{AB} \big\| \mathbb I_A \otimes \sigma_B\right) 
 \label{SandUp} 
\end{align}
It is known that $D_a(\rho \| \sigma) \geq \tilde{D}_a(\rho \| \sigma)$~\cite{Datta}.  
Combined with the trivial relations between $\uparrow$ and $\downarrow$ versions of the same type, this yields the following monotonicity scheme (the arrows go from the larger quantity to the smaller one) valid for all states $\rho_{AB}$ \cite{BertaIEEE}:  
\begin{equation}
\begin{tikzcd}
H_a^{\uparrow}(A|B) \arrow[d] & \tilde{H}_a^{\uparrow}(A|B) \arrow[l] \arrow[dl] \arrow[d] \\
H_a^{\downarrow}(A|B) & \tilde{H}_a^{\downarrow}(A|B) \arrow[l]
\label{monotone}
\end{tikzcd}
\end{equation}

Moreover, both $D_a(\rho \| \sigma)$ and $\tilde{D}_a(\rho \| \sigma)$ are monotonically increasing in $a$, then all the different conditional Rényi entropies will be monotonically decreasing in $a$~\cite{Tomamichel2014}. Taking the limit $a\to1$ in a suitable way~\cite{Muller} one can recover the relative entropy
$
    D_1(\rho\|\sigma) 
    = \tilde{D}_1(\rho\|\sigma) 
    = D(\rho\|\sigma) 
    = \mathrm{tr}
    \left[
    \rho(\log\rho-\log\sigma)
    \right]
$
and thus define the well-known von Neumann conditional entropy
\begin{align}
    H(A|B) &= 
    -D(\rho_{AB}\|\mathbb I_A\otimes \rho_B)
    \label{VonNeumann}
    \\
    &=\sup_{\sigma_B\in\mathcal S (B)}
    -D(\rho_{AB}\|\mathbb I_A\otimes \sigma_B)
\end{align}
Hence for $a \to 1$ all the different Rényi entropies defined earlier converge to the same quantity. Furthermore, in the limit $a\to\infty$~\cite{Muller} two different definitions of conditional min-entropy~\cite{RennerThesis,tomamichel2012framework} emerge
\begin{align}
    \tilde{H}_{\mathrm{min}}^\downarrow(A|B) 
    &=-\tilde D_\infty (\rho_{AB}\|\mathbb I_A\otimes \rho_B) \\
    & = \sup_\lambda
    \left\{
    \lambda\in \mathbb R |
    \rho_{AB}\le2^{-\lambda}
    \mathbb I_A\otimes \rho_B 
    \right\}\nonumber\\
    &= -\log
    \left\|
    (\mathbb I_A\otimes \rho_B )^{-\frac12}\;
    \rho_{AB}\;
     (\mathbb I_A\otimes \rho_B )^{-\frac12}
    \right\|_\infty \,,
\end{align}
and its optimization reads
\begin{align}
    \tilde{H}_{\mathrm{min}}^\uparrow(A|B) 
    &= \sup_{\sigma_B\in \mathcal S(B)}-\log
    \left\|
     \sigma_B ^{-\frac12}
    \rho_{AB}\;
     \sigma_B ^{-\frac12}
    \right\|_\infty
    \, .\label{HminUP}
\end{align}
Here again $\sigma_B$ is implicitly understood as $\mathbb I_A \otimes \sigma_B$ inside the norm. 
Remarkably, the data processing inequality (DPI) has not been proven to hold for the Petz-Rényi divergence for $a>2$ \cite{Petz1986,Muller,Audenaert}, then it has not been taken into consideration for a possible definition on the conditional min-entropy. Now we recall the definition of \textit{smooth} conditional min-entropy \cite{tomamichel2012framework}. Let $\rho_{AB}\in \mathcal S(AB)$ and $\epsilon>0$. We define the $\epsilon$-ball of density operator around $\rho_{AB}$ as
\begin{align}
    \mathcal{B}^\epsilon(\rho_{AB}) & = 
    \left\{
    \tau_{AB}\in \mathcal S(AB) |\,\sqrt{1-F(\rho_{AB},\tau_{AB})^2}\le\epsilon
    \right\},
\end{align}
with $F(\rho_{AB},\tau_{AB})$ the quantum fidelity. The smoothed version of Eq. \eqref{HminUP} is
\begin{equation}
     \tilde H^{\uparrow \epsilon}_{\mathrm{min}}(A|B) 
    = \sup_{\substack{\tilde \rho_{AB}\in \mathcal B^\epsilon(\rho_{AB}) \\ \sigma_B\in\mathcal S(B)}}
   -\log
    \left\|
    \sigma_B^{-\frac12}\;
    \tilde \rho_{AB}\;
     \sigma_B^{-\frac12}
    \right\|_\infty
    \, .
\end{equation}

\section{Details on computation of conditional entropies for PSK protocol}\label{renyiproof}

In this Appendix we obtain explicit expressions for the quantum conditional entropies of a CQ state of the form (\ref{CQdef}). In particular, we focus on the CQ states arising from BPSK and QPSK protocols in reverse reconciliation and under the assumption of passive attacks.

The Petz-Rényi conditional entropy \eqref{PetzDown} of the CQ state reads
\begin{align}
    H_a^\downarrow& (Y|E) 
     =
    \frac{1}{1-a}
    \log 
    \mathrm{tr}
    \left\{
    \left( \sum_{y=0}^{N-1} p_y |y\rangle\langle y| \otimes \rho_{E|y} \right)^a \rho_E^{1-a} 
    \right \}\\
    & =
    \frac{1}{1-a}
    \log 
    \mathrm{tr}
    \left\{ \left(
    \frac{1}{N^a}\sum_{y=0}^{N-1} |y \rangle\langle y| \otimes \rho_{E|y}^a \right)
    \rho_E^{1-a} 
    \right \} \label{ptezl2} \\
 &=
    \frac{1}{1-a}
    \log \left\{
    \frac{1}{N^a}\sum_{y=0}^{N-1} \,
    \mathrm{tr}
    \left( 
     \rho_{E|y}^a \;\rho_E^{1-a} 
    \right )
    \right\} \, .
    \label{petzl4APP}
\end{align}
In line \eqref{ptezl2} we put $p_y=1/N$ and exploit the fact that the matrix which is raised to the power $a$ is block-diagonal, then the exponentiation commutes with the tensor product. Indeed if we consider $M = \bigoplus_{y=0}^{N-1}M_y$, then $M^a = \bigotimes_{y=0}^{N-1}M^a_y$, and in our case we have $\sum_{y=0}^{N-1} p_y \, \ket{y}\bra{y} \otimes \rho_{E|y}
= \bigoplus_{y=0}^{N-1} \bigl(p_y \rho_{E|y}\bigr)$. Moreover in line \eqref{petzl4} we use the property for the trace of a Kronecker product $\mathrm{tr}(\rho\otimes \sigma) = \mathrm{tr}(\rho)\mathrm{tr}( \sigma)$ and the linearity of the trace.

We also notice that for both the examples of BPSK and QPSK discussed here (see Sections \ref{subsec:BPSK} and \ref{subsec:QPSK} below), and under the assumption of passive attacks, the conditional states $\rho_{E|y}$ are obtained through the action of a symmetry group. 
Therefore, we have
\begin{align}
    \rho_{E|y \oplus t} = U_t \, \rho_{E|y} \, U_t^\dag \, ,
    \label{symmetry}
\end{align}
where $\{ U_t \}_{t=0, \dots, N-1}$ is a finite group of unitary matrices, for $t=0, \dots, N-1$, and $\oplus$ denotes summation modulo $N$.
This also implies that the average state $\rho_E = \frac{1}{N} \sum_{y=0}^{N-1} \rho_{E|y}$ is invariant under the action of $\{ U_t \}$.
Therefore, we can write for the Petz-Rényi conditional entropy:
\begin{align}
     H_a^\downarrow (Y|E) 
    & =
    \frac{1}{1-a}
    \log \left\{
    \frac{1}{N^a}
    \sum_{y=0}^{N-1}   \,
    \mathrm{tr}
    \left( 
     \rho_{E|y}^a \;\rho_E^{1-a} 
    \right )
    \right\}\\
& =
    \frac{1}{1-a}
    \log \left\{
    \frac{1}{N^a}
    \sum_{y=0}^{N-1}   \,
    \mathrm{tr}
    \left( 
     U_y \, \rho_{E|0}^a \, U_y^\dag \; \rho_E^{1-a} 
    \right )
    \right\}\label{l2}\\
& =
    \frac{1}{1-a}
    \log \left\{
    \frac{1}{N^a}
    \sum_{y=0}^{N-1}   \,
    \mathrm{tr}
    \left( 
     \rho_{E|0}^a \, U_y^\dag \; \rho_E^{1-a} \, U_y \,  
    \right )
    \right\}\label{l3}\\
&=
        \frac{1}{1-a}
    \log \left\{
    \frac{N}{N^a} \,
    \mathrm{tr}
    \left( 
     \rho_{E|0}^a \;\rho_E^{1-a} 
    \right )
    \right\}\label{l4}\\  
    &=
    \log N + \frac{1}{1-a}
    \log \left\{
    \mathrm{tr}
    \left( 
     \rho_{E|0}^a \;\rho_E^{1-a} 
    \right )
    \right\} \,.
\end{align}
Here in line \eqref{l2} we write the different Eve's conditional states $\rho_{E|y}$, with $y=1,2,3$, in terms of $\rho_{E|0}$ using \eqref{symmetry}, then we observe that if $U$ is a unitary operator $(U\rho \,U^\dagger) ^a = U\rho^a \,U^\dagger$. 
(This latter property holds not only for unitary transformations, but also for isometries, as we show in the Lemma~\ref{isoexp} below.)
Finally, line \ref{l4} stems from invariance of Eve's state $\rho_E$ under the symmetry in Eq.~\eqref{symmetry}.

\begin{lemma}\label{isoexp}
    An isometry $V$ on a density matrix $\rho$ commutes with its exponentiation, \textit{i.e.}~for any $a\in \mathbb R$ we have
\begin{align}
    V \rho^a V^\dagger = (V \rho V^\dagger)^a \, .
\end{align}
\end{lemma}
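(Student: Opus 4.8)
The plan is to reduce the claim to the spectral theorem together with the defining property of an isometry, namely $V^\dagger V = \mathbb{I}$ on the domain (while $V V^\dagger$ is in general only the orthogonal projector onto the range of $V$, not the identity). I would first dispose of the case of a positive integer exponent by a telescoping argument: writing $(V \rho V^\dagger)^a$ as a product of $a$ copies of $V \rho V^\dagger$ and collapsing each internal pair $V^\dagger V$ to $\mathbb{I}$ gives $V \rho^a V^\dagger$ at once. This settles the integer case but does not cover the fractional and negative exponents that actually occur in the Rényi functionals (e.g. the factors $\rho_E^{1-a}$ and $\rho_E^{(1-a)/2a}$), so the real content of the lemma is the general real exponent.

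For general $a \in \mathbb{R}$ I would invoke the spectral decomposition $\rho = \sum_i \lambda_i \ket{i}\bra{i}$ with $\{\ket{i}\}$ orthonormal, so that by functional calculus $\rho^a = \sum_i \lambda_i^a \ket{i}\bra{i}$, with the power understood on the support of $\rho$. Setting $\ket{v_i} := V\ket{i}$, the isometry property yields $\braket{v_i | v_j} = \bra{i} V^\dagger V \ket{j} = \delta_{ij}$, so the transported vectors remain orthonormal in the target space. Hence $V \rho V^\dagger = \sum_i \lambda_i \ket{v_i}\bra{v_i}$ is \emph{already} in spectral form, and applying functional calculus to it gives $(V \rho V^\dagger)^a = \sum_i \lambda_i^a \ket{v_i}\bra{v_i} = V\big(\sum_i \lambda_i^a \ket{i}\bra{i}\big)V^\dagger = V \rho^a V^\dagger$, which is the assertion.

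The one delicate point — and the step I expect to be the main obstacle — is the treatment of those exponents for which $x \mapsto x^a$ is singular at $x = 0$ (negative $a$, and the fractional $a$ where one takes the generalized inverse on the support). There one must check that $V$ maps the support of $\rho$ onto the support of $V \rho V^\dagger$, so that the functional calculus is applied consistently on both sides and the kernel contributions — including vectors outside the range of $V$, which lie in $\ker(V \rho V^\dagger)$ — vanish identically. Because the nonzero eigenvalues of $\rho$ and of $V \rho V^\dagger$ coincide and their eigenvectors are related by the transport $\ket{i} \mapsto V\ket{i}$, this matching of supports holds automatically, and the identity extends to all $a \in \mathbb{R}$.
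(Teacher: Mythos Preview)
Your argument is correct and follows essentially the same route as the paper: spectral decomposition of $\rho$, transport of eigenvectors by $V$, preservation of orthonormality via $V^\dagger V=\mathbb I$, and application of the functional calculus to the transported spectral form. The telescoping warm-up for integer $a$ and the explicit discussion of supports for singular exponents are extra care the paper omits, but the core proof is the same (the paper in fact remarks that the identity extends to any Borel function).
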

\begin{proof}
Let the spectral decomposition of the density operator be
$
\rho = \sum_{j} \lambda_j \, |j\rangle\!\langle j|
$, with $\lambda_j \ge 0 $.
The action of the isometry on the basis vectors
$| \psi_j \rangle := V\,|j\rangle$.
preserve of orthonormality, indeed  since \(V^\dagger V = \mathbb I\), we can observe
$
\langle \psi_j | \psi_k \rangle
= \langle j | V^\dagger V | k \rangle
= \langle j | k \rangle
= \delta_{jk}
$.
Now the spectral form of the image can be written using the above vectors:
\begin{equation}
V \rho V^\dagger
= V \Bigl( \sum_j \lambda_j |j\rangle\!\langle j| \Bigr) V^\dagger
= \sum_j \lambda_j \, | \psi_j \rangle\!\langle \psi_j | \, ,
\end{equation}
and clearly, the eigenvalues remain unchanged. Then being $V \rho V^\dagger$ diagonal in $\{\ket{\psi_j\}}$, $(V \rho V^\dagger)^a$ will be diagonal as well, with exponentiated eigenvalues. Thus, for any real exponent $a$ we can  write
\begin{equation}
\bigl( V \rho V^\dagger \bigr)^{\!a}
= \sum_j \lambda_j^{a} \, | \psi_j \rangle\!\langle \psi_j |
= V\Bigl( \sum_j \lambda_j^{a} |j\rangle\!\langle j| \Bigr)V^\dagger
= V \rho^{a} V^\dagger .
\end{equation}
Hence \( V \rho^a V^\dagger = (V \rho V^\dagger)^a \). Actually, this argument holds not only for exponentiation, but for any Borel function $f$,
$
f(V \rho V) = V f(\rho) V
$.
\end{proof}

As for the optimized Petz-Rényi conditional entropy \eqref{PetzUp}, we apply the characterization in Lemma 1 of Ref.~\cite{BertaJMP} and obtain
\begin{align}
       H_a^\uparrow&(Y|E) 
     = \frac{a}{1-a}
    \log
    \left\{
    \mathrm{tr}
    \left[
    \mathrm{tr}_{Y}(\rho_{YE}^a)
    \right]^\frac{1}{a}
    \right\} \nonumber\\
& = \frac{a}{1-a}
    \log
    \left\{
    \mathrm{tr}
    \left[
    \mathrm{tr}_{Y} \left(
    \frac{1}{N^a}
    \sum_{y=0}^{N-1}   \,
    |y\rangle_Y\langle y| \otimes \rho_{E|y}^a
    \right)
    \right]^\frac{1}{a}
    \right\} \nonumber\\
& = \frac{a}{1-a}
    \log
    \left\{
    \mathrm{tr}
    \left[
    \frac{1}{N^a}
    \sum_{y=0}^{N-1}   \,
    \rho_{E|y}^a
    \right]^\frac{1}{a}
    \right\}
    \label{PetzUpl3}\\
    & = \frac{a}{1-a}
    \log
    \left\{
    \frac{1}{N}\,
    \mathrm{tr}
    \left[
    \sum_{y=0}^{N-1}   \,
    \rho_{E|y}^a
    \right]^\frac{1}{a}
    \right\}\label{PetzUpl4}\\
    & = -\frac{a}{1-a}\log N+
    \frac{a}{1-a}
    \log
    \left\{
    \mathrm{tr}
    \left[
    \sum_{y=0}^{N-1}   \,
    \rho_{E|y}^a
    \right]^\frac{1}{a}
    \right\} \, .
    \label{PetzUpl5}
\end{align}
In \eqref{PetzUpl3} we simply compute the partial trace over the classical register $Y$; the linearity of the trace is used in line \eqref{PetzUpl4} to take out the factor $1/N$, while \eqref{PetzUpl5} comes from the properties of the logarithm.

About the sandwiched Rényi conditional entropy \eqref{SandDown}, we obtain
\begin{align}
\tilde{H}_a^\downarrow& (Y|E)
 =\frac{1}{1-a} \,
    \log{
    \mathrm{tr} \left[
\left(
\rho_{E}^{\frac{1-a}{2 a}}
\rho_{YE}\;
\rho_{E}^{\frac{1-a}{2a}}
\right)^a
    \right]
    } \nonumber\\
=& \frac{1}{1-a} \,
    \log{
    \mathrm{tr} \left[
\left(
\rho_{E}^{\frac{1-a}{2 a}}
\left( 
\sum_{y=0}^{N-1}
 \frac{|y\rangle_Y\langle y | \otimes \rho_{E|y}}{N}\right) \;
\rho_{E}^{\frac{1-a}{2 a}}
\right)^a
    \right]
    }\nonumber \\
=&\frac{1}{1-a} \,
    \log{
    \mathrm{tr} \left[
\left( \frac{1}{N}\sum_{y=0}^{N-1}
 |y\rangle_Y\langle y | \otimes 
\rho_{E}^{\frac{1-a}{2 a}}
\rho_{E|y}\;
\rho_{E}^{\frac{1-a}{2 a}}
\right)^a
    \right]
    }\label{sandownl3} \\
=&\frac{1}{1-a} \,
    \log{
    \mathrm{tr} \left[
 \frac{1}{N^a}\sum_{y=0}^{N-1}
 |y\rangle_Y\langle y |  \otimes 
\left(\rho_{E}^{\frac{1-a}{2 a}}
\rho_{E|y}\;
\rho_{E}^{\frac{1-a}{2a}}
\right)^a
    \right]
    } \label{sandownl4}\\
=&\frac{1}{1-a} \,
    \log{
    \mathrm{tr} \left[
 \frac{1}{N^a}\sum_{y=0}^{N-1} 
\left(\rho_{E}^{\frac{1-a}{2 a}}
\rho_{E|y}\;
\rho_{E}^{\frac{1-a}{2a}}
\right)^a
    \right]
    } \label{sandownl5}\\
=&\frac{1}{1-a} \,
    \log{
     \frac{1}{N^a}\sum_{y=0}^{N-1}
    \mathrm{tr} \left[
\left(\rho_{E}^{\frac{1-a}{2 a}}
\rho_{E|y}\;
\rho_{E}^{\frac{1-a}{2a}}
\right)^a
    \right]
    } \nonumber
\end{align}

Here in line \eqref{sandownl3} we factorize operators on the same quantum system in the tensor product. Then in  \eqref{sandownl4} we use again the argument about  power of block-diagonal matrices. Finally in \eqref{sandownl5} we compute the partial trace on the classical register $Y$ using $\mathrm{tr}_{YE} = \mathrm{tr}_Y\mathrm{tr}_E$. 

Finally we consider the optimized sandwiched Rényi conditional entropy \eqref{SandUp}. We have:
\begin{align}
\tilde{H}_a^\uparrow (Y|E)
& = \sup_{\sigma_E}
\frac{1}{1-a} \,
    \log
     \sum_{y=0}^{N-1}    
    \mathrm{tr} \left[
\left(
\frac{
\sigma_{E}^{\frac{1-a}{2 a}}
\rho_{E|y}\;
\sigma_{E}^{\frac{1-a}{2a}}}{N}
\right)^a
    \right]\, .
\end{align}

A lower bound on this quantity is obtained by restricting the optimization to invariant states of the form
\begin{align}
    \sigma_E = U_t \, \sigma_E \, U_t^\dag \, .
\end{align}
Then, applying the same symmetry argument as above we obtain
\begin{align}
\tilde{H}_a^\uparrow (Y|E)
& \geq \log N  \,\nonumber\\
+ \frac{1}{1-a} 
   & \log
    \left(
     \inf_{\sigma \in \mathfrak{I}} 
    \mathrm{tr} \left[
\left(\sigma_{E}^{\frac{1-a}{2 a}}
\rho_{E|0}\;
\sigma_{E}^{\frac{1-a}{2a}}
\right)^a
    \right]
    \right) 
    \label{HsupAPP} \, ,
\end{align}
where $\mathfrak{I}$ denotes the set of invariant states.
(The infimum in \eqref{HsupAPP} appears because we are interested in $a >1$.)

\section{Derivation of CQ state $\rho_{YE}$ for passive attacks on NPSK protocols}
\label{cqpassive}
Here we explicitly obtain the CQ state $\rho_{YE}$ for NPSK protocol.
Alice holds a virtual computational state $\ket{\phi_x}$ corresponding to her preparation choice $x$. In this scheme, a generic state of the tripartite system $ABE$ of Alice, Bob and Eve after the transmission can be represented as
\begin{align}
    \ket{\Theta}_{ABE} = 
    \frac{1}{\sqrt{N}}
    \sum_{x=0}^{N-1}
    \ket{\phi_x}_A
    \ket{\sqrt{\eta}\;\alpha_x}_B 
    \ket{\sqrt{1-\eta}\;\alpha_x}_E
    \, .
\end{align}
The trace on Eve's system, $\rho_{AB}=\mathrm{tr}_E \rho_{ABE}$, where $\rho_{ABE}=\ket{\Theta}\bra{\Theta}_{ABE}$ yields
\begin{align}
    \rho_{AB} = \frac{1}{N} \sum_{x,x'=0}^{N-1} 
\ket{\phi_x}_A\bra{\phi_{x'}} \otimes 
\ket{\sqrt{\eta}~\alpha_x}_B\bra{\sqrt{\eta}~\alpha_{x'}} 
\nonumber\\
\times \exp\left[-|\alpha|^2(1-\eta)
\left( 
1 - e^{-i(x'-x)\frac{\pi}{2}} 
\right)\right]
\, .
\label{rhoABstate}
\end{align}
As usual in QKD, Eve's attack is formalized by the action of the \textit{key map} $\mathcal{G}:S(ABE)\mapsto S(ABY_1E)$ (see Eq.~\eqref{gkeymap})
\begin{align}
    \rho_{ABY_1E} &= \mathcal G (\rho_{ABE})
    =\frac{1}{N}
    \sum_{y,y'=0}^{N-1}
    \sum_{x,x'=0}^{N-1}
    \ket{\phi_x}_A\bra{\phi_{x'}}\nonumber\\
    &\otimes
    \sqrt{\Lambda_y}\ket{\sqrt{\eta}\;\alpha_x}_B
    \bra{\sqrt{\eta}\;\alpha_{x'}}\sqrt{\Lambda_{y'}}
    \nonumber\\
    &\otimes \ket{y}_{Y_1}\bra{y'}
    \otimes\ket{\sqrt{1-\eta}\;\alpha_x}_E\bra{\sqrt{1-\eta}\;\alpha_{x'}}
     \, .
\end{align}
followed by an isometry (see Eq.~\eqref{isom}),
\begin{align}
    &\rho_{ABYY_1E} = V\rho_{ABEY_1}V^\dagger 
    =\frac{1}{N}
    \sum_{y,y'=0}^{N-1}
    \sum_{x,x'=0}^{N-1}
    \left(\ket{\phi_x}\bra{\phi_{x'}}\right)_A\nonumber\\
    &\quad\otimes
    \left(\sqrt{\Lambda_y}\ket{\sqrt{\eta}\;\alpha_x}
    \bra{\sqrt{\eta}\;\alpha_{x'}}\sqrt{\Lambda_{y'}}\right)_B
    \otimes 
    \left(\ket{y}\bra{y'}\right)_{Y}\nonumber\\
    &\quad
    \otimes 
    \left(\ket{y}\bra{y'}\right)_{Y_1}
    \otimes
    \left(\ket{\sqrt{1-\eta}\;\alpha_x}\bra{\sqrt{1-\eta}\;\alpha_{x'}}\right)_E\,.
     \, \label{eq:cqstate}
\end{align}
As discussed in detail in Appendix \ref{dual}, we are interested in how much knowledge about $Y$ is contained inside Eve's quantum system $E$, then Eq.~\eqref{CQdef} specializes in
\begin{equation}
   \rho_{YE}= \mathrm{tr}_{ABY_1} (\rho_{ABYY_1E})=\frac1N
    \sum_{y=0}^{N-1}
    \ket{y}_Y\bra{y}\otimes
    \rho_{E|y}
\end{equation}
where 
\begin{align}
    \rho_{E|y} 
    = \sum_{x=0}^{N-1}
    \braket{\Lambda_y}_x
    \ket{\sqrt{1-\eta}\;\alpha_x}_E\bra{\sqrt{1-\eta}\;\alpha_{x}}
\end{align}
is Eve's conditional state with
$
     \braket{\Lambda_y}_x 
     =~_{B}\braket{\sqrt{\eta}\,\alpha_x
    |\Lambda_y|
    \sqrt{\eta}\,\alpha_x}_B 
$. 
In particular, and this is what matters for us, the reduced state $\rho_{YE}$ has the structure of a CQ state, where $Y$ represents the outcome of Bob's measurement and is therefore treated as classical information. 

\section{Duality relations}
\label{dual}
In this paper, we computed different bounds of the secret-key rate for BPSK and QPSK protocols, under the assumption of a specific and known attack performed by the eavesdropper. This allowed us to know analytically the explicit form of the states describing Eve's quantum system $E$ on which the various entropic functionals are computed. This particular case study can indeed be a very interesting starting point, providing some benchmarks on the values of the secret-key rates, especially in the finite-size regime. A more complete analysis will require considering a generic and unknown attack performed by the eavesdropper. In this case, we cannot know the form of the final state in Eve's possession; therefore, we are not able to compute directly the different entropies like we did throughout this paper. 
In this scenario, \textit{duality relations} can be employed to estimate quantities that depend on Eve's system in terms of other ones that are only functions of Alice and Bob's quantum states.
Let $\rho_{ABC}$ a tripartite pure state; we can consider three different duality relations involving Rényi entropies \cite{BertaIEEE}
\begin{align}
     &\left[H_a^\downarrow(A|B) + H_b^\downarrow(A|C)\right]_{\substack{a,b\in[0,2] \\a+b=2}} = 0,
     \label{dual1}\\
     & \left[\tilde{H}_a^\uparrow(A|B) +\tilde{H}_b^\uparrow(A|C)\right]_{\substack{a,b\ge1/2\\a^{-1}+b^{-1}=2}}=0,
     \label{dual2}\\
     & [{H}_a^\uparrow(A|B) +\tilde{H}_b^\downarrow(A|C)]_{\substack{a,b>0\\a\,b=1}}=0\,.
     \label{dual3}
\end{align}
In our framework, we start considering a tripartite system $ABE$, then to  mathematically describe Bob's measurements, it is possible to extend it to $ABYY_1E$, where both $Y$ and $Y_1$ are classical registers.  First of all we introduce a \textit{key map} $\mathcal G$ \cite{reliableColes} such that
\begin{align}\label{gkeymap}
\mathcal{G}_{B \to Y_1 B} = \sum_{y=0}^{N-1} |y\rangle_{Y_1} \otimes \sqrt{\Lambda_y}   
 \, .
\end{align}
Here $\Lambda_y\equiv \Lambda_{y(\gamma)}$ refers to a region of the optical phase $R_y$ space integrating over the coherent state $\ket{\gamma}$ that assigns a discrete value $y$. Clearly $\mathcal{G}$ acts trivially on $A$ and $E$. The role of this map is to store Bob's measurement outcomes into an orthonormal basis $\{\ket{y}_Y\}$, $y=0,...,N-1$.Once Bob stored the value in the classical register $Y_1$, the attack of Eve can be modelled by the following isometry that copy the information from the register $Y_1$ to the register $Y$.
\begin{align}
V_{Y_1 \to Y Y_1} = \sum_{y=0}^{N-1} |y\rangle_{Y} 
\otimes | y \rangle_{Y_1} \langle y|,
 \qquad V^\dag V=\mathbb I.
 \label{isom}
\end{align}
This defines the projector $\Pi=VV^\dag $ on the subspace with $Y=Y_1$. Note that $Y_1$ simply holds a copy of $Y$ in the computational basis and that if we trace out system $Y_1$, the resulting state is CQ, as we obtained in the specialized case of passive attacks on NPSK protocols in Appendix \ref{cqpassive}.  Hence, the state we are interested in is the pure state $\rho_{ABYY_1E} = V\mathcal G(\rho_{ABE})V^\dagger$ over five quantum systems. 

\subsection{Duality between Petz Rényi entropies \eqref{dual1}}

Consider  $\rho_{ABYY_1E} = V\mathcal G(\rho_{ABE})V^\dagger$, then Eq.~\eqref{dual1} reads
\begin{align}
    H_a^\downarrow( Y |E) + H_b^\downarrow( Y |A B Y_1 ) = 0, \quad a,b\in[0,2],  \, a+b =2\,.
\end{align}
Therefore, here we should focus on the quantity
\begin{align}
H_b^\downarrow(Y|ABY_1) &= 
\frac{1}{1-b} \,
\log
\mathrm{tr} \left\{
 \rho_{YABY_1}^b \,
\left(\mathbb I_{Y}\otimes\rho_{ABY_1}^{1-b}
\right)
\right\} \, .
\end{align}

Note that the state $\rho_{ABY_1}$ is CQ:
\begin{align}
    \rho_{ABY_1}
    =
    \mathcal{Z}(\mathcal{G}(\rho_{AB})) \, ,
    \label{ZG}
\end{align}
where $\mathcal{Z}$ is the well-known \textit{pinching map}~\cite{reliableWinick}, which induces a complete dephasing in the basis of the classical register $Y_1$. We use the fact that
\begin{align}
     \rho_{ABYY_1} = V \mathcal{G}(\rho_{AB}) V^\dagger \, , 
\end{align}
together with Lemma \eqref{isoexp} of Appendix~\ref{sec2}:
\begin{align}
     \left( 
     V \mathcal{G}(\rho_{AB}) V^\dagger
     \right)^b
     =
     V
     \left( 
     \mathcal{G}(\rho_{AB}) 
     \right)^b
     V^\dagger \, .
\end{align}
We thus obtain
\begin{align}
H_b^\downarrow(Y|ABY_1) 
& \\
=\frac{1}{1-b} \,
&\log \,
\mathrm{tr} \left\{
 \left( 
     V \mathcal{G}(\rho_{AB}) V^\dag
     \right)^b \,
\left(\mathbb I_{Y}\otimes\rho_{ABY_1}^{1-b}
\right)
\right\} \nonumber\\
= 
\frac{1}{1-b} \,
&\log \,
\mathrm{tr} \left\{
 V 
     \mathcal{G}(\rho_{AB})^b V^\dag \,
\left(\mathbb I_{Y}\otimes\rho_{ABY_1}^{1-b}
\right)
\right\} \nonumber\\
= 
\frac{1}{1-b} \,
&\log \,
\mathrm{tr} \left\{
 \mathcal{G}(\rho_{AB})^b 
 \,    V^\dag 
\left(\mathbb I_{Y}\otimes\rho_{ABY_1}^{1-b} 
\right) V
\right\}  \, ,\nonumber
\end{align}
where in the last equality we used the ciclity property of the trace. Now we use the following identity:

\begin{align}
V^\dag \left(
\mathbb I_{Y} \otimes \sigma_{ABY_1} 
\right)
V
 &=\left( \sum_{y} {}_{Y}\langle y | \otimes 
|y\rangle_{Y_1}\langle y| 
\right)\nonumber\\
\times&\left(
\mathbb I_{Y}\otimes \sigma_{ABY_1} 
\right)
\left( \sum_{y'} |y'\rangle_{Y} \otimes |y'\rangle_{Y_1}\langle y'| 
\right) \nonumber\\
& =
\sum_{y} 
|y\rangle_{Y_1} \langle y| \otimes
\langle y| \sigma_{ABY_1} | y \rangle \, .
\end{align}
Note that this is an implementation of the pinching map. Using Eq.~\eqref{ZG} with $\mathcal{Z}^2 = \mathcal{Z}$,
we finally obtain
\begin{align}
H_b^\downarrow(Y|ABY_1) 
&= 
\frac{1}{1-b} \,
\log
\mathrm{tr} \left\{
     \mathcal{G}(\rho_{AB})^b 
\mathcal{Z}(\mathcal{G}(\rho_{AB}))
^{1-b}
     \right\}  \, .
\end{align}
We can make this more explicit as

\begin{align}
&H_b^\downarrow(Y|ABY_1) 
 = 
\frac{1}{1-b} \,
\log
\mathrm{tr} \biggl\{\nonumber\\
     &\mathcal{G}(\rho_{AB})^b 
\left(
\sum_y |y\rangle \langle y| \otimes 
\left( \sqrt{\Lambda_y} \, \rho_{AB}
\sqrt{\Lambda_y} \right)^{1-b}
\right)
     \biggr\} \nonumber\\
& = 
\frac{1}{1-b} \,
\log
\mathrm{tr} \left\{
     \mathcal{Z}( \mathcal{G}(\rho_{AB})^b )
\mathcal{Z}(\mathcal{G}(\rho_{AB}))
^{1-b}
     \right\} \nonumber\\
& = 
\frac{1}{1-b} \,
\log
\sum_y  \mathrm{tr} \left\{
     \langle y| \mathcal{G}(\rho_{AB})^b 
     | y \rangle
\left( \sqrt{\Lambda_y} \, \rho_{AB}
\sqrt{\Lambda_y} \right)^{1-b}
     \right\} \, .
\end{align}

The expression may further simplify if $\mathcal{G}$ is also an isometry (\textit{i.e.}~no use of post-selection). In this case we use the identity $\mathcal{G}(\sigma)^b=\mathcal{G}(\sigma^b) \, .$ (that can be proven again with Lemma \ref{isoexp} in appendix \ref{sec2}). This yields

\begin{align}
&H_b^\downarrow(Y|ABY_1) 
= 
\frac{1}{1-b} \,
\log
\mathrm{tr} \left\{
     \mathcal{G}(\rho_{AB}^b) 
\mathcal{Z}(\mathcal{G}(\rho_{AB}))
^{1-b}
     \right\}  \nonumber\\
&= 
\frac{1}{1-b} \,
\log
\mathrm{tr} \left\{
     \rho_{AB}^b \,
\mathcal{G}^\dag(\mathcal{Z}(\mathcal{G}(\rho_{AB}))
^{1-b})
     \right\}\nonumber  \\
&= 
\frac{1}{1-b} \,
\log
\mathrm{tr} \sum_y 
\left\{
   \sqrt{\Lambda_y} \,  \rho_{AB}^b 
\sqrt{\Lambda_y} \left( \sqrt{\Lambda_y} \, \rho_{AB}
\sqrt{\Lambda_y} \right)^{1-b} 
     \right\} 
\end{align}
where $\mathcal{G}^\dag$ is the dual. Thus the bound \eqref{RateSand} on the min-entropy becomes
\begin{align}
     \tilde H_\text{min}^{\uparrow\epsilon}&(Y^n|E^n)
     \geq   H^\downarrow_a(Y^n|E^n) - \frac{g(\epsilon)}{a -1}\nonumber\\
    &=-n  {H}_{2-a}^\downarrow(Y|ABY_1) - \frac{g(\epsilon)}{a -1}
    \, ,
    \qquad a\in(1,2].
\end{align} 

\subsection{Duality between optimized sandwiched Rényi entropies \eqref{dual2}}
Consider  $\rho_{ABYY_1E} = V\mathcal G(\rho_{ABE})V^\dagger$, then Eq.~\eqref{dual2} reads
\begin{align}
    \tilde{H}_a^\uparrow(Y |E ) +\tilde{H}_b^\uparrow(Y |A B Y_1 )=0, \nonumber\\
\end{align}
with $a^{-1}+b^{-1}=2$, and $a,b \ge\frac{1}{2}.$
Therefore, here we should focus on the quantity 
\begin{widetext}
\begin{align}
    \tilde{H}_b^\uparrow(Y|AB Y_1) 
    & = \sup_{\sigma_{AB Y_1}}
    \frac{1}{1-b} \,
    \log{
    \mathrm{tr} \left[
\left(
\mathbb{I}_Y \otimes \sigma_{AB Y_1}^{\frac{1-b}{2b}}
\, \rho_{AB YY_1} \,
\mathbb{I}_Y \otimes \sigma_{AB Y_1}^{\frac{1-b}{2b}}
\right)^b
    \right]
    } \nonumber\\
& = \sup_{\sigma_{AB Y_1}}
    \frac{1}{1-b} \,
    \log{
    \mathrm{tr} \left[
\left(
\mathbb{I}_Y \otimes \sigma_{AB Y_1}^{\frac{1-b}{2b}}
\, V \mathcal{G}(\rho_{AB}) V^\dag \,
\mathbb{I}_Y \otimes \sigma_{AB Y_1}^{\frac{1-b}{2b}}
\right)^b
    \right]
    } \, .
\end{align}
We can use the alternate characterization of this entropy:
\begin{align}
\tilde{H}_b^\uparrow(Y|ABY_1) 
& = \sup_{\sigma_{ABY_1}} 
\frac{b}{1-b}
\log
\left\| (V \mathcal{G}(\rho_{AB}) V^\dag)^{\frac12} \,
\mathbb{I}_Y\otimes \sigma_{ABY_1}^{\frac{1-b}{b}} \, 
(V \mathcal{G}(\rho_{AB}) V^\dag)^{\frac12}
\right\|_b 
\label{eq40}\nonumber\\
& = \sup_{\sigma_{ABY_1}} 
\frac{b}{1-b}
\log
\left\| V \mathcal{G}(\rho_{AB})^{\frac12} V^\dag
\mathbb{I}_Y\otimes \sigma_{ABY_1}^{\frac{1-b}{b}} V \mathcal{G}(\rho_{AB})^{\frac12}
V^\dag
\right\|_b \nonumber\\
& = \sup_{\sigma_{ABY_1}} 
\frac{b}{1-b}
\log
\left\| \mathcal{G}(\rho_{AB})^{\frac12}
\mathcal{Z}( \sigma_{ABY_1}^{\frac{1-b}{b}} ) \mathcal{G}(\rho_{AB})^{\frac12}
\right\|_b \nonumber\\
%%%%
& = \sup_{\sigma_{AB Y_1}}
    \frac{1}{1-b} \,
    \log{
    \mathrm{tr} \left[
\left(
\mathcal{Z}( \sigma_{ABY_1}^{\frac{1-b}{b}} )^\frac12
\, \mathcal{G}(\rho_{AB}) \,
\mathcal{Z}( \sigma_{ABY_1}^{\frac{1-b}{b}} )^\frac12
\right)^b
    \right]
    } \, .
\end{align}
Thus the bound \eqref{RateSand} on the min-entropy becomes
\begin{align}
     \tilde H_\text{min}^{\uparrow\epsilon}(Y^n|E^n)
     \geq  \tilde H^\uparrow_a(Y^n|E^n) - \frac{g(\epsilon)}{a -1}
    =-n \tilde {H}_{\frac{a}{2a-1}}^\uparrow(Y|ABY_1) - \frac{g(\epsilon)}{a -1} \, ,
    \qquad a>1.
\end{align} 
\subsection{Duality between the sandwiched Rényi entropy and the optimized Petz Rényi entropy \eqref{dual3}}
Consider  $\rho_{ABYY_1E} = V\mathcal G(\rho_{ABE})V^\dagger$, then Eq.~\eqref{dual3} can be exploited in two different ways.

Case 1:
\begin{align} 
    \tilde{H}_a^\downarrow(Y |E ) +{H}_b^\uparrow(Y |A B Y_1 ) =0,  \qquad a,b >0, \qquad a\cdot b=1.
\end{align}

Here we should focus on the quantity 
\begin{align}
H_b^\uparrow(Y|ABY_1) &= 
\sup_{\sigma_{ABY_1}}
\frac{1}{1-b} \,
\log
\mathrm{tr} \left\{
\left[
 \rho_{YABY_1}^b \,
\left(\mathbb I_{Y}\otimes\sigma_{ABY_1}^{1-b}
\right)
\right]
\right\} \, .
\label{eq49}
\end{align}
We can try to get rid of the supremum following \cite{BertaIEEE}, using the Holder's inequality
\begin{align}
    \mathrm{tr}[AB] \le  \mathrm{tr}[A^p]^{\frac{1}{p}} \mathrm{tr}[B^q]^{\frac{1}{q}}, \qquad \frac{1}{p}+\frac{1}{q} = 1
\end{align}
From Eq.~(\ref{eq49}) we can write
\begin{align}
    H_b^\uparrow(Y|ABY_1) &= 
\sup_{\sigma_{ABY_1}}
\frac{1}{1-b} \,
\log
\mathrm{tr} \left\{
\left[
 \mathrm{tr_{Y}}(\rho_{YABY_1}^b) \,
\sigma_{ABY_1}^{1-b}
\right]
\right\}
\end{align}
and identifying $A=\mathrm{tr_{Y}}(\rho_{YABY_1}^b)$, $B=\sigma_{ABY_1}^{1-b}$, $p=\frac{1}{b}$, $q=\frac{1}{1-b}$
\begin{align}
    H_b^\uparrow(Y|ABY_1)&\le
    \sup_{\sigma_{ABY_1}}\frac{1}{1-b}
    \log
    \left\{
    \left[
    \mathrm{tr}
    \left[
    \mathrm{tr}_{Y}(\rho_{B_1ABB_2Y_1}^b)
    \right]^\frac{1}{b}
    \right]^b\cdot
    \mathrm{tr}[(\sigma_{ABY_1})] ^\frac{1}{1-b}
    \right\}\nonumber\\
    &=\frac{1}{1-b}
    \log
    \left\{
    \left[
    \mathrm{tr}
    \left[
    \mathrm{tr}_{Y}(\rho_{YABY_1}^b)
    \right]^\frac{1}{b}
    \right]^b
    \right\}\nonumber\\
    &=\frac{b}{1-b}
    \log
    \left\{
    \mathrm{tr}
    \left[
    \mathrm{tr}_{Y}(\rho_{YABY_1}^b)
    \right]^\frac{1}{b}
    \right\} \, .
\end{align}
This quantity happens to be trivially also a lower bound for $ H_b^\uparrow(Y|ABY_1)$, by definition of supremum, if we chose a particular realization of $\sigma^*_{AB\textbf{}}$:
\begin{align}
\sigma_{ABY_1}^* = \frac{\mathrm{tr}_{Y}(\rho_{YABY_1}^b)
    ^\frac{1}{b}}{\mathrm{tr}\left[\mathrm{tr}_{Y}(\rho_{Y AB Y_1}^b)
    ^\frac{1}{b}\right]}
\end{align}
Thus we can claim
\begin{align}
      H_b^\uparrow(Y|ABY_1) = \frac{b}{1-b}
    \log
    \left\{
    \mathrm{tr}
    \left[
    \mathrm{tr}_{Y}(\rho_{YABY_1}^b)
    \right]^\frac{1}{b}
    \right\} \, .
\end{align}

Now recall that $\rho_{ABYY_1} = V \mathcal{G}( \rho_{AB} ) V^\dagger$.
Then we obtain
\begin{align}
      H_b^\uparrow(Y|ABY_1) 
      & = \frac{b}{1-b}
    \log
    \left\{
    \mathrm{tr}
    \left[
    \mathrm{tr}_{Y}( (V \mathcal{G}( \rho_{AB} ) V^\dag)^b )
    \right]^\frac{1}{b}
    \right\} \nonumber\\
& = \frac{b}{1-b}
    \log
    \left\{
    \mathrm{tr}
    \left[
    \mathrm{tr}_{Y}( V \mathcal{G}( \rho_{AB} )^b V^\dag )
    \right]^\frac{1}{b}
    \right\} \nonumber \\
& = \frac{b}{1-b}
    \log
    \left\{
    \mathrm{tr}
    \left[
    \mathrm{tr}_{Y}( \mathcal{G}( \rho_{AB} )^b  )
    \right]^\frac{1}{b}
    \right\}  \, .
\end{align}

More explicitly, 
\begin{align}
      H_b^\uparrow(Y|ABY_1) 
& = \frac{b}{1-b}
    \log
    \left\{
    \mathrm{tr}
    \left[
    \sum_y \langle y | 
    \mathcal{G}( \rho_{AB} )^b | y \rangle 
    \right]^\frac{1}{b}
    \right\} \, . 
\end{align}

Thus the bound \eqref{RateSand} on the min-entropy becomes
\begin{align}
     \tilde H_\text{min}^{\uparrow\epsilon}(Y^n|E^n)
     \geq \tilde H^\downarrow_a(Y^n|E^n) - \frac{g(\epsilon)}{a -1}
    =-n {H}_{\frac{1}{a}}^\uparrow(Y|ABY_1) - \frac{g(\epsilon)}{a -1} \, ,
    \qquad a\in(1,2].
\end{align} 

Case 2:
\begin{align} 
    {H}_a^\uparrow(Y |E ) +\tilde{H}_b^\downarrow(Y |A B Y_1 )=0,  \qquad a,b >0, \qquad a\cdot b=1.
\end{align}
Here we should focus on the quantity 
\begin{align}
    \tilde{H}_b^\downarrow(Y |A B Y_1 ) 
    & = 
    \frac{1}{1-b}
    \log
    \left \| 
    (\rho_{Y AB Y_1})^\frac12 \,
    (\mathbb I_{Y} \otimes\rho_{AB Y_1})^\frac{1-b}{b} \,
    (\rho_{Y AB Y_1})^\frac12 \,
    \right \|_b \nonumber\\
& = 
    \frac{1}{1-b}
    \log
    \left \| 
    ( V \mathcal{G}(\rho_{AB}) V^\dag )^\frac12 \,
    (\mathbb I_{Y} \otimes\rho_{AB Y_1})^\frac{1-b}{b} \,
    ( V \mathcal{G}(\rho_{AB}) V^\dag )^\frac12 \,
    \right \|_b\nonumber\\
& = 
    \frac{1}{1-b}
    \log
    \left \| 
    \mathcal{G}(\rho_{AB})^\frac12 
    V^\dag
    (\mathbb I_{Y} \otimes \rho_{AB Y_1})^\frac{1-b}{b} \,
    V \mathcal{G}(\rho_{AB})^\frac12 \,
    \right \|_b \, .
\end{align}
Recall that $\rho_{AB Y_1} = \mathcal{Z} (\mathcal{G} (\rho_{AB}))$, which implies
\begin{align}
    \tilde{H}_b^\downarrow(Y |A B Y_1 ) 
    & = 
    \frac{1}{1-b}
    \log
    \left \| 
    \mathcal{G}(\rho_{AB})^\frac12 
    \mathcal{Z} (\mathcal{G} (\rho_{AB}))^\frac{1-b}{b} \,
    \mathcal{G}(\rho_{AB})^\frac12 \,
    \right \|_b \nonumber\\
& = \frac{1}{1-b} \,
    \log{
    \mathrm{tr} \left[
\left(
\mathcal{Z} (\mathcal{G} (\rho_{AB}))^\frac{1-b}{b}
\, \mathcal{G}(\rho_{AB}) \,
\mathcal{Z} (\mathcal{G} (\rho_{AB}))^\frac{1-b}{b} )
\right)^b
    \right]
    }  \, .
\end{align}

Thus the bound \eqref{RateSand} on the min-entropy becomes
\begin{align}
     &\tilde H_\text{min}^{\uparrow\epsilon}(Y^n|E^n)
    \geq  H^\uparrow_a(Y^n|E^n) - \frac{g(\epsilon)}{a -1}
    \nonumber\\
    &\quad=-n \tilde {H}_{\frac{1}{a}}^\downarrow(Y|ABY_1) - \frac{g(\epsilon)}{a -1} \,, 
    \qquad a\in(1,2].
\end{align} 
\end{widetext}

\section{Thermal noisy channel}
\label{sec:thermal}
For both BPSK ($N=2$) and QPSK ($N=4$), the usual EB representation is
\[
\ket{\Phi}_{AA'} = \sum_{x=0}^{N-1} \sqrt{p_x} \ket{x}_A \ket{\alpha_x}_{A'},
\]
with
$p_x = \frac{1}{N}$ uniform modulation), $
\alpha_x = (-1)^x \alpha$ for BPSK, 
$\alpha_k = i^k e^{i\pi/4} \alpha$ for QPSK.
So the classical register in the PM picture becomes a quantum system $A$ entangled with the signal mode $A'$.
Eve’s input state on mode $E_1$ is thermal with mean photon number $\mu$:
\begin{align}
\rho_{E_1} &= \frac{1}{\pi\mu} \int d^2\beta\, e^{-|\beta|^2/\mu} \ket{\beta}\bra{\beta}
\nonumber\\
&= \frac{1}{1+\mu} \sum_{n=0}^\infty \left( \frac{\mu}{1+\mu} \right)^n \ket{n}\bra{n}.    
\end{align}
A convenient purification is the usual two–mode squeezed vacuum:
\begin{align}
    \ket{\Psi}_{E_1E_2} = \sum_{n=0}^\infty c_n \ket{n}_{E_1} \ket{n}_{E_2}, \qquad 
c_n = \sqrt{ \frac{\mu^n}{(1+\mu)^{n+1}} }.
\end{align}
By construction,
\begin{align}
    \rho_{E_1} = \mathrm{tr}_{E_2} \left[ \ket{\Psi}\bra{\Psi}_{E_1E_2} \right].
\end{align}
Before the channel, the global pure state is
\begin{align}
\ket{\Phi_{\text{in}}}_{AA'E_1E_2}
&= \ket{\Phi}_{AA'} \otimes \ket{\Psi}_{E_1E_2}
\nonumber\\
&= \sum_{x=0}^{N-1} \sqrt{p_x} \ket{x}_A \ket{\alpha_x}_{A'} 
\otimes \sum_{n=0}^\infty c_n \ket{n}_{E_1} \ket{n}_{E_2}.
\end{align}
We now let a beam splitter of transmittance $\eta$ act only on modes $A'$ and $E_1$.
Let $a_{A'}$, $a_{E_1}$ be the annihilation operators of the input modes, and $b$, $e_1'$ for the output modes (Bob and Eve’s first mode). A standard beam splitter convention is:
\begin{align}
b &= \sqrt{\eta} a_{A'} + \sqrt{1-\eta} a_{E_1}, \\
e_1' &= -\sqrt{1-\eta} a_{A'} + \sqrt{\eta} a_{E_1}.
\end{align}
There exists a unitary $U_\eta^{A'E_1}$ such that:
\begin{align}
    b = U_\eta^{A'E_1} a_{A'} (U_\eta^{A'E_1})^\dagger, \qquad
e_1' = U_\eta^{A'E_1} a_{E_1} (U_\eta^{A'E_1})^\dagger.
\end{align}
The full unitary is:
\begin{align}
    U = \mathbb{I}_A \otimes U_\eta^{A'E_1} \otimes \mathbb{I}_{E_2}.
\end{align}
Write $\ket{\alpha_x}_{A'} = D_{A'}(\alpha_x) \ket{0}_{A'}$, where:
\begin{align}
    D_{A'}(\alpha_x) = \exp\left( \alpha_x a_{A'}^\dagger - \alpha_x^* a_{A'} \right).
\end{align}
Under the BS unitary:
\begin{align}
    U_\eta^{A'E_1} D_{A'}(\alpha_x) (U_\eta^{A'E_1})^\dagger
= D_B(\sqrt{\eta}\,\alpha_x) D_{E_1'}(-\sqrt{1-\eta}\,\alpha_x),
\end{align}
i.e., the displacement splits between Bob and Eve. The output state for a symbol $x$ becomes:
\begin{align}
    \ket{\Xi_x}_{BE_1'E_2} 
= D_B(\sqrt{\eta}\,\alpha_x) D_{E_1'}(-\sqrt{1-\eta}\,\alpha_x) \ket{\Xi_0}_{BE_1'E_2},
\end{align}
where

\begin{align}
    \ket{\Xi_0}_{BE_1'E_2}
= U_\eta^{A'E_1} \left( \ket{0}_{A'} \otimes \ket{\Psi}_{E_1E_2} \right)
\end{align}
is a fixed 3-mode Gaussian state.
Thus the global output state is:
\begin{align}
    \ket{\Phi_{\text{out}}}_{ABE_1'E_2}
= \sum_{x=0}^{N-1} \sqrt{p_x} \ket{x}_A \ket{\Xi_x}_{BE_1'E_2}.
\end{align}
From Bob’s point of view, the channel $A' \to B$ is thermal-loss with transmittance $\eta$ and environment mean photon number $\mu$, so that after the beam splitter:
\begin{align}
    \rho_{B|x} = \mathcal{E}_{\eta,\mu}\left( \ket{\alpha_x}\bra{\alpha_x} \right)
\end{align}
is a displaced thermal state with mean amplitude: $\sqrt{\eta} \alpha_x$  and variance
$
V_B = \frac{1}{2} + (1-\eta)\mu
$. 
This is the standard “entangling cloner” representation.
For BPSK, we compute $p(y|x)$. In the pure-loss case:
\begin{equation}
p(q|x) = \frac{1}{\sqrt{\pi}} \exp\left[ -\left( q - (-1)^x \sqrt{2\eta}|\alpha| \right)^2 \right],
\end{equation}
(mean: $(-1)^x \sqrt{2\eta}|\alpha|$, variance $1/2$). With thermal noise this transforms into:
\begin{align}
    p(q|x) = \frac{1}{\sqrt{2\pi V_B}} 
\exp\left[ -\frac{\left(q - (-1)^x \sqrt{2\eta}|\alpha| \right)^2}{2V_B} \right].
\end{align}
% \textcolor{blue}{
% Let us reparametrize this function. Let us define
% \begin{align}
%     2V_B=\eta\xi+1\Rightarrow
%     2V_B=\eta(2\Delta q_{\mathrm{obs}}^2-1)+1\nonumber\\\Rightarrow
%     \Delta q_{\mathrm{obs}}^2=\frac{2V_B-(1-\eta)}{2\eta}
% \end{align}
% }
Bob’s decision rule:
\begin{align}
    y = \begin{cases}
0 & \text{if } q > 0, \\
1 & \text{if } q \leq 0.
\end{cases}
\end{align}
Hence the probability of correct detection is:
\begin{align}
\label{thermalhomo}
    &p(y = x | x) = \Phi\left( \frac{\sqrt{2\eta}|\alpha|}{\sqrt{V_B}} \right)\\
&= \frac{1}{2} \left[ 1 + \mathrm{erf}\left( \frac{\sqrt{\eta}|\alpha|}{\sqrt{\tfrac{1}{2} + (1-\eta)\mu}} \right) \right],
\end{align}
and
\begin{align}
    p(y \neq x | x) = 1 - p(y = x | x).
\end{align}
One can check that for $\mu=0$ Eq. \eqref{pydifx} is recovered:
\[
p(y = x | x) = \frac{1}{2} \left[ 1 + \mathrm{erf}(\sqrt{2\eta}|\alpha|) \right].
\]
For QPSK $p(y|k)$, in pure-loss heterodyne distribution:
\begin{align}
    p(\gamma|\alpha_k) = \frac{1}{\pi} \exp\left( -|\gamma - \sqrt{\eta} \alpha_k|^2 \right),
\end{align}
\begin{align}
    \mathrm{Re}\,\gamma&\sim \mathcal{N} \left( \mathrm{Re}(\sqrt{\eta} \alpha_k), \tfrac{1}{2} \right)\\
     \mathrm{Im}\,\gamma &\sim \mathcal{N} \left( \mathrm{Im}(\sqrt{\eta} \alpha_k), \tfrac{1}{2} \right).
\end{align}
Now with thermal noise:
\begin{align}
    p(\gamma|\alpha_k) 
= \frac{1}{\pi(1 + \bar{n}_B)} 
\exp\left( -\frac{|\gamma - \sqrt{\eta} \alpha_k|^2}{1 + \bar{n}_B} \right).
\end{align}
with $\bar{n}_B = (1 - \eta)\mu,$ and the
quadrature variance:
\begin{align}
    V_{\text{het}} = \frac{1 + (1 - \eta)\mu}{2}.
\end{align}
If you define:
\begin{align}
    P_\pm^{(\mu)} = \frac{1}{2} \left[ 1 \pm \mathrm{erf}\left( \frac{\sqrt{\eta}|\alpha|}{\sqrt{2[1 + (1 - \eta)\mu]}} \right) \right],
\end{align}
then
\begin{align}
\label{thermalhetero}
    p(y|k) =
\begin{cases}
(P_+^{(\mu)})^2 & y = k, \\[4pt]
(P_-^{(\mu)})^2 & |y-k| = 2, \\[4pt]
P_+^{(\mu)} P_-^{(\mu)} & |y-k| = 1 \text{ or } 3.
\end{cases}
\end{align}
For $\mu=0$ this reduces to the original $P_\pm$ in Eq. \eqref{eq:p_pm} and the matrix \eqref{matrix} in the paper.
So to generalize Eq. \eqref{CQ_qpsk} you just keep exactly the same structure but replace $P_\pm$ with $P_\pm^{(\mu)}$ above.
Now, we need to determined the Eve's conditional state $\rho_{E_1E_2|x}$ in the Fock space. Fix an input symbol $x$. The 3-mode output state on $(B,E_1',E_2)$ is
\begin{align}
\ket{\Xi_x}_{BE_1'E_2} = U_{\eta}^{A'E_1} \left( \ket{\alpha_x}_{A'} \otimes \ket{\Psi}_{E_1E_2} \right).
\end{align}
Eve’s conditional state before Bob’s measurement is
\begin{align}
\rho_{E_1'E_2|x} = \mathrm{tr}_B \left[ \ket{\Xi_x}\bra{\Xi_x}_{BE_1'E_2} \right].
\end{align}
This is the target state to represent in the Fock basis. The coherent state and the TMSV in the Fock basis are
\begin{align}
\ket{\alpha_x}_{A'} &= e^{-|\alpha_x|^2/2} \sum_{k=0}^\infty \frac{\alpha_x^k}{\sqrt{k!}} \ket{k}_{A'}, \\
\ket{\Psi}_{E_1E_2} &= \sum_{n=0}^\infty c_n \ket{n}_{E_1} \ket{n}_{E_2}, \quad 
c_n = \sqrt{ \frac{\mu^n}{(1+\mu)^{n+1}} }.
\end{align}
So the 3-mode input is:
\begin{align}
\ket{\text{in}_x}_{A'E_1E_2} &= e^{-|\alpha_x|^2/2} \sum_{k=0}^\infty \sum_{n=0}^\infty \frac{\alpha_x^k}{\sqrt{k!}} c_n \nonumber\\
&\qquad \times \ket{k}_{A'} \ket{n}_{E_1} \ket{n}_{E_2}.
\end{align}
The beam splitter action on $(A',E_1)$ gives:
\begin{align}
U_{\eta}^{A'E_1} \ket{k}_{A'} \ket{n}_{E_1} = \sum_{j=0}^{k+n} U_{k,n}^{(j)} \ket{j}_B \ket{k+n-j}_{E_1'}.
\end{align}
A closed form for $U_{k,n}^{(j)}$ is \cite{Campos1898,Kim2002}:
\begin{align}
U_{k,n}^{(j)} = \sqrt{\frac{k! n!}{j! (k+n-j)!}} \sum_{r=\max(0, j-n)}^{\min(j,k)} (-1)^{n-j+r} \binom{j}{r}\nonumber\\
\times \binom{k+n-j}{k-r} \eta^{\frac{n-j+2r}{2}} (1-\eta)^{\frac{k+j-2r}{2}}
\end{align}
Then the output 3-mode pure state is:
\begin{align}
\ket{\Xi_x}_{BE_1'E_2} = e^{-|\alpha_x|^2/2} &\sum_{k,n=0}^\infty \sum_{j=0}^{k+n} \frac{\alpha_x^k}{\sqrt{k!}} c_n U_{k,n}^{(j)} \nonumber\\
&\times \ket{j}_B \ket{k+n-j}_{E_1'} \ket{n}_{E_2}.
\end{align}
Define the amplitude
\begin{align}
\psi_{j,r,s}^{(x)} = e^{-|\alpha_x|^2/2} \sum_{\substack{k,n \ge 0 \\ k+n = r+s}} &\frac{\alpha_x^k}{\sqrt{k!}} c_n U_{k,n}^{(j)} \nonumber\\
&\times \delta_{r,k+n-j} \delta_{s,n},
\end{align}
which is the coefficient of $\ket{j}_B \ket{r}_{E_1'} \ket{s}_{E_2}$. So the full state is
\begin{align}
\ket{\Xi_x}_{BE_1'E_2} = \sum_{j,r,s} \psi_{j,r,s}^{(x)} \ket{j}_B \ket{r}_{E_1'} \ket{s}_{E_2}.
\end{align}
Tracing out Bob, Eve’s conditional state in the Fock basis is
\begin{align}
\rho_{E_1'E_2|x} = \sum_{r,s,r',s'} \left[ \sum_{j} \psi_{j,r,s}^{(x)} \left( \psi_{j,r',s'}^{(x)} \right)^* \right] \nonumber\\
\times \ket{r,s} \bra{r',s'}_{E_1'E_2}.
\end{align}
So the matrix elements are:
\begin{align}
\left[ \rho_{E_1'E_2|x} \right]_{(r,s),(r',s')} = \sum_{j=0}^{j_\text{max}} \psi_{j,r,s}^{(x)} \left( \psi_{j,r',s'}^{(x)} \right)^*
\end{align}
In the pure-loss case, the BPSK CQ state in reverse reconciliation is
\begin{align}
\rho_{YE} = \frac{1}{2} \sum_{y=0,1} \ket{y}\bra{y}_Y \otimes \rho_{E|y},
\label{CQbpskThermal}
\end{align}
with
\begin{align}
\rho_{E|y} = \sum_{x=0,1} p(x|y)\, \rho_{E|x}.
\end{align}
With excess noise and Eve holding $E_1E_2$, we can now update to:
\begin{align}
\rho_{YE_1E_2} = \frac{1}{2} \sum_{y=0,1} \ket{y}\bra{y}_Y \otimes \rho_{E_1E_2|y}
\label{rhoYEnoiseB}
\end{align}
where
\begin{align}
\rho_{E_1E_2|y} = \sum_{x=0,1} p(x|y)\, \rho_{E_1E_2|x}.
\end{align}
Finally turning the attention to QPSK case, the exact analogue of \eqref{CQbpskThermal} is:
\begin{align}
\rho_{YE_1E_2} = \frac{1}{4} \sum_{y=0}^{3} \ket{y}\bra{y}_Y \otimes \rho_{E_1E_2|y}
\label{rhoYEnoiseQ}
\end{align}
with
\begin{align}
\rho_{E_1E_2|y} = \sum_{k=0}^{3} p(k|y)\, \rho_{E_1E_2|k}.
\end{align}
Thus these CQ states can be used to compute any Petz or sandwiched Rényi conditional entropy $H_a(Y|E_1E_2)$, just as in the pure-loss case, now accounting for the enlarged environment.

\section{Calculation of weight $\omega$ for dimension reduction method}
\label{sec:weight}

In the context of finite-size security for DM CV QKD protocols, the parameter $\omega$ (often referred to as the weight) is essential for the dimension reduction method. Since the Hilbert spaces in continuous-variable systems are infinite-dimensional, the security analysis requires truncating the space to a finite-dimensional subspace $\mathcal{H}^{n_c}$, defined up to a maximum photon number $n_c$ (the cutoff). The parameter $\omega$ represents the probability for the quantum state falling outside this cutoff subspace. Mathematically, if $\Pi^{\perp}$ is the projector onto the orthogonal complement of the cutoff space (the space containing all number states with $n > n_c$), then $\omega$ is defined as the expectation value of this projector on the state $\rho$:
\begin{equation}
    \omega = \mathrm{tr}
     \left[
    \rho \Pi^{\perp}
     \right] = 
    \mathrm{tr} \left[
    \rho \sum_{k=n_c+1}^{\infty} \ket{k}\bra{k}
     \right].
\end{equation}
This value quantifies the probability that the system is found in a state with a photon number strictly greater than $n_c$.

We indicate as $\omega_{exp}$ the value of the expected weight outside the cutoff space, where the state received by Bob is  as a displaced thermal state due to channel noise. We assume Bob's state $\rho_B$ is a thermal state with mean photon number $\bar{n}_B$, expressed in the Fock basis as:
\begin{equation}
    \rho_B = \frac{1}{1+\bar{n}_B} \sum_{n=0}^\infty \left( \frac{\bar{n}_B}{1+\bar{n}_B} \right)^n \ket{n}\bra{n}.
\end{equation}
To calculate $\omega_{exp}$, we compute the trace of this state over the subspace orthogonal to the cutoff. The projector is defined as:
\begin{equation}
    \Pi^{\perp} = \sum_{k=n_c+1}^{\infty} \ket{k}\bra{k}.
\end{equation}
The expected value is given by:
\begin{equation}
    \omega_{exp} = \mathrm{tr}[\rho_B \Pi^{\perp}] = \sum_{k=n_c+1}^{\infty} \bra{k} \rho_B \ket{k}.
\end{equation}
Substituting the expression for $\rho_B$ into the trace yields:
\begin{equation}
    \omega_{exp} = \sum_{k=n_c+1}^{\infty} \bra{k} \left[ \frac{1}{1+\bar{n}_B} \sum_{n=0}^\infty \left( \frac{\bar{n}_B}{1+\bar{n}_B} \right)^n \ket{n}\bra{n} \right] \ket{k}.
\end{equation}
Using the orthonormality of the Fock basis ($\braket{k|n} = \delta_{kn}$), the double summation simplifies to a single sum over the diagonal elements:
\begin{equation}
    \omega_{exp} = \frac{1}{1+\bar{n}_B} \sum_{n=n_c+1}^{\infty} \left( \frac{\bar{n}_B}{1+\bar{n}_B} \right)^n.
\end{equation}
To simplify the notation, let $x = \frac{\bar{n}_B}{1+\bar{n}_B}$. Since $\bar{n}_B > 0$, it follows that $0 < x < 1$. The term outside the sum can be written as $(1-x)$. The expression becomes:
\begin{equation}
    \omega_{exp} = (1-x) \sum_{n=n_c+1}^{\infty} x^n.
\end{equation}
This is a geometric series. The sum of a geometric series starting from an index $N$ is given by $\sum_{n=N}^{\infty} x^n = \frac{x^N}{1-x}$ for $|x| < 1$. In our case, the starting index is $N = n_c + 1$. Substituting this into the formula for $\omega_{exp}$:
\begin{equation}
    \omega_{exp} = (1-x) \cdot \frac{x^{n_c+1}}{1-x} = x^{n_c+1}.
\end{equation}
Finally, substituting back $x = \frac{\bar{n}_B}{1+\bar{n}_B}$ and using the relation $\bar{n}_B = \frac{\eta\xi}{2}$, we obtain the final expression for the expected weight outside the cutoff:
\begin{equation}
    \omega_{exp} = \left( 
    \frac{\bar{n}_B}{1+\bar{n}_B} \right)^{n_c+1} = 
    \left( \frac{\frac{\eta\xi}{2}}{1+\frac{\eta\xi}{2}} 
    \right)^{n_c+1} 
    = 
    \left( \frac{\eta\xi}{2+\eta\xi} 
    \right)^{n_c+1} 
\end{equation}
or alternatively in terms of $\mu$:
\begin{align}
    \omega_{exp} = \left[ 
    \frac{(1-\eta)\mu}{1+(1-\eta)\mu} \right]
    ^{n_c+1}
\end{align}
and recall that $n_c +1 = dim (E)$. The coefficient that enters the security proof is:
\begin{align}
    \Delta (\omega) = 
    \sqrt{\omega}\log_2 (|Y|) 
    +(1+\sqrt{\omega})
    h_2
    \left(
    \frac{\sqrt{\omega}}{1+\sqrt{\omega}}
    \right) 
\end{align}
Thus we have 
\begin{itemize}
    \item BPSK ($|Y| =2$): 
    \begin{align}
         \Delta (\omega) = 
    \sqrt{\omega}
    +(1+\sqrt{\omega})
    h_2
    \left(
    \frac{\sqrt{\omega}}{1+\sqrt{\omega}}
    \right) 
    \end{align}
     \item QPSK ($|Y| =4$): 
    \begin{align}
         \Delta (\omega) = 
    2\sqrt{\omega}
    +(1+\sqrt{\omega})
    h_2
    \left(
    \frac{\sqrt{\omega}}{1+\sqrt{\omega}}
    \right) 
    \end{align}
\end{itemize}

\end{document}